\newtheorem{theorem}{Theorem}
\newtheorem{definition}{Definition}
\begin{document}
\title{More about Base Station Location Games }

\author{Fran\c cois M\'eriaux, Samson Lasaulce, Michel Kieffer\\
L2S - CNRS - SUPELEC - Univ Paris-Sud\\
3 rue Joliot-Curie
F-91192 Gif-sur-Yvette, France\\
\{meriaux,lasaulce,kieffer\}@lss.supelec.fr}

\maketitle
\small
\begin{abstract}

This paper addresses the problem of locating base stations in a
certain area which is highly populated by mobile stations; each
mobile station is assumed to select the closest base station. Base
stations are modeled by players who choose their best location for
maximizing their uplink throughput. The approach of this paper is to
make some simplifying assumptions in order to get interpretable
analytical results and insights to the problem under study.
Specifically, a relatively complete Nash equilibrium (NE) analysis
is conducted (existence, uniqueness, determination, and efficiency).
Then, assuming that the base station location can be adjusted
dynamically, the best-response dynamics and reinforcement learning
algorithm are applied, discussed, and illustrated through numerical
results.

\end{abstract}

\section{Introduction}

Mobile terminals (MTs) are currently gaining increased autonomy of
decision to allow a better use of the available wireless resources.
For example, MTs may choose their wireless access technology or the
base station (BS) or access point to which they want to connect. We
could imagine that this may be done in the future independently of
the network operator owner of the BS. A mobile operator deploying
BSs for a wireless network will have to deal with these new
characteristics. If his goal is to maximize the traffic gathered by
his own BSs, he will have to take into account the presence of
competitor network operators when deciding on the location of BSs.
If every operator involved has the same reasoning, this problem of
BSs placement may be cast in the framework of game theory and more
precisely in the context of location games.


The history of location games starts with the work of Hotel\-ling~\cite{Hotelling1929}
in which the notion of spatial competition in a
duopoly situation is introduced. More precisely, two firms compete for
benefits over a finite segment crowded with customers. This results in the
partition of the segment into a convex area of influence for each firm.
Plastria~\cite{Plastria01} gives an overview of optimization approaches to
place new facilities in an environment with pre-existing facilities. A large
overview on location games is also presented in~\cite{GabThi92}. Location
games are extended to the context of wireless networks with works such as~%
\cite{Altman09} and~\cite{SilvaDownlnk10}. The main difference
arising in this new context is the interaction between MTs due to
the mutual interference they generate. This point makes the
association problem between MTs and BSs complex. As an association
between a MT and a BS depends on SINR, the association relies on the
respective locations of the MT and the BS, but also on the MTs
already connected to the BS. Whereas~\cite {SilvaDownlnk10} focuses
on the downlink case, in~\cite{Altman09} the location of BSs and the
association choice of the MTs is treated as a Stackelberg
game~\cite{Stack} in the uplink case. The context of our work is
similar to the one in~\cite{Altman09} but several interesting
results are obtained in the present paper. The main
\emph{contributions} of this paper can be summarized as follows:

\begin{itemize}
  \item As in \cite{Altman09} MTs are assumed to operate
  in the uplink and to be distributed along a one-dimensional
  region. However, each MT
   is assumed to select the closest BS (e.g., based on
   measure given by a GPS -global positioning system- receiver).
   This leads to a convenient form for the BSs utility
   functions (Sec.~\ref{sec:uti}). As a consequence, the existence of a pure Nash equilibrium
   can be made rigorously (Sec.~\ref{sec:exi}). 

  \item Due to the symmetry of the problem, multiple Nash equilibria
  generally exist. However, if the locations can be ordered (which
  is easy for one-dimension regions), the Nash equilibrium can be
  determined and checked to be unique (Sec.~\ref{sec:nash_carac}).

  \item By making the reasonable assumption that the BS
  heights are much less than the typical distance between the BSs, the game can be further simplified and shown to be
  a form of Cournot oligopoly~\cite{Cournot1838}.

  \item In the two-player case, the efficiency of the Nash equilibrium is studied by
  evaluating the price of anarchy~\cite{Koutsoupias99worst-caseequilibria}. The
  influence of deploying its BS in the first place is
  studied by considering a Stackelberg formulation of the
  problem (Sec.~\ref{sec:Stackelberg}).

  \item Assuming that the BS locations can be adjusted
  dynamically (which would be relevant in scenario like the one of
  small cells where only some of the small BS have to be
  active), the best response dynamics and reinforcement learning
  algorithm~\cite{BushMosteller55}\cite{Sastry1994}\cite{LearningAutomtaNarendra89} are performed (Sec.~\ref{sec:dynamic}).

  \item The made assumptions lead to several interpretations which could be
  further analyzed in the light of a more general framework (e.g.,
  in two-dimensional regions).
\end{itemize}


The remainder of the paper is organized as follows. Section~\ref{sec:model}
introduces the physical model and the parameters of the $K$-player game.
Section~\ref{sec:Nash} describes the Nash equilibrium of the game in the
one-dimensional case. In Section~\ref{sec:comparison}, the Nash equilibrium,
the Stackelberg equilibrium, and the social optimum are compared. Section~%
\ref{sec:dynamic} presents a way to reach equilibrium using best-response dynamics and reinforcement
learning. Finally, Section~\ref{sec:concl}
concludes this work. 

\section{Model}

\label{sec:model}


Consider a plane to which a frame $\mathbb{R}$ is attached. A MT
$X$ located in a position $x \in \mathbb{R}^2$ in this plane is linked with a BS
$X_{1}$ of height $\varepsilon $ situated in $x_{1}\in \mathbb{R}^2$, see
Figure~\ref{Flo:epsilon}. We define the Signal to Noise Ratio (SNR) and the Signal to Interference plus Noise Ration (SINR) of this MT
\begin{equation}
SNR_X = \frac{P_{X}.h_{X_{1}}(x)}{\sigma_{X_{1}}^{2}},
\end{equation}
\begin{equation}
SINR_X = \frac{P_{X}.h_{X_{1}}(x)}{\sigma_{X_{1}}^{2}+I_{X_{1}}(x)},
\end{equation}
where $P_{X}$ is the transmission power of the MT $X$, i.e. the level of power chosen by the MT to transmit its signal. $\sigma
_{X_{1}}^{2}$
is the power of the channel noise, $I_{X_{1}}(x)$ is some interference term, $%
h_{X_{1}}(x)$ is the attenuation introduced by the uplink channel from $X$
to $X_{1}$. Here, it is assumed that
\begin{equation}
h_{X_{1}}(x)=\left( \left| x-x_{1}\right| ^{2}+\varepsilon ^{2}\right) ^{-%
\frac{\alpha }{2}},  \label{Eq:ChannelAttn}
\end{equation}
where $\left| x\right| $ is the $\ell _{2}$-norm of $x$ and $\alpha \geq 2$
is the path-loss exponent, $\alpha =2$ corresponding to the free-space
path-loss case. A higher value of $\alpha$ suits to worse channel conditions.

With Single User Decoding (SUD) at the BS $X_{1}$, there is no hierarchy for decoding the incoming signals at the BS. Hence, the signal from MT $X$ is decoded by taking into account the full interference and the uplink capacity between $X$ and $X_{1}$ may be written as
\begin{equation}
C_{X}=\log \left( 1+SINR_X \right), \label{Eq:Capa}
\end{equation}

Without loss of generality, when several MTs are considered, it is assumed that $P_{X}$ does not
depend on the MT and is normalized, \emph{i.e.}, $P_{X}=1$.
Moreover, the channel conditions, described by
\eqref{Eq:ChannelAttn}, are the same for every MT and the noise
power is constant $\sigma _{X_{1}}=\sigma $. With these assumptions,
\eqref{Eq:Capa} becomes
\begin{equation}
C_{X}=\log \left( 1+\frac{\left( \left| x-x_{1}\right| ^{2}+\varepsilon
^{2}\right) ^{-\frac{\alpha }{2}}}{{\sigma }^{2}+I_{X_{1}}(x)}\right).
\end{equation}

\begin{figure}[tbp]
\centering
\includegraphics[scale=0.3]{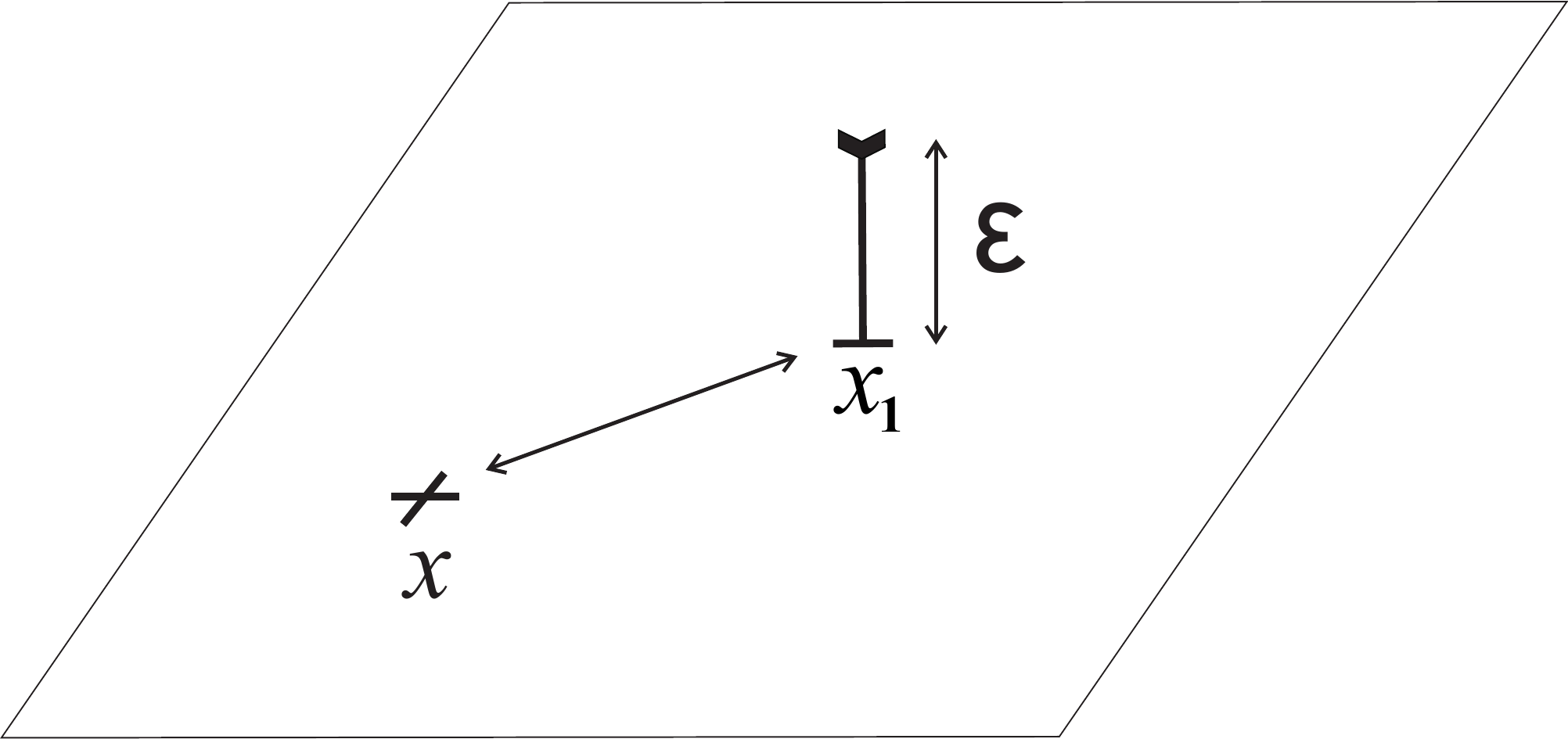}
\caption{Base station located at $x_1$ and mobile station located at
$x$} \label{Flo:epsilon}
\end{figure}
When several BSs are located on the plane, the MTs are assumed to be
able to choose the BS they want to be linked with. In this paper, we
consider that
this association is made based on SNR. Given $\left( \ref{Eq:ChannelAttn}%
\right) $, choosing the BS with the highest SNR is equivalent to choosing
the closest BS. 
Thus, it is assumed that a MT always chooses the closest BS to its
location, leading to convex cells for BSs.
For example, considering two BSs $X_{1}$ and $X_{2}$ at positions
$x_{1}$ and $x_{2}$ and a MT $X$ at $x$, one has
\[
C_{X}=\left\{
\begin{array}{l}
\log \left( 1+\frac{(\left| x-x_{1}\right| ^{2}+\varepsilon ^{2})^{-\frac{%
\alpha }{2}}}{{\sigma }^{2}+I_{1}(x)}\right) \mbox{if }\left| x-x_{1}\right|
^{2}\leqslant \left| x-x_{2}\right| ^{2} \\
\log \left( 1+\frac{(\left| x-x_{2}\right| ^{2}+\varepsilon ^{2})^{-\frac{%
\alpha }{2}}}{{\sigma }^{2}+I_{2}(x)}\right) \mbox{if }\left| x-x_{2}\right|
^{2}\leqslant \left| x-x_{1}\right| ^{2}.
\end{array}
\right.
\]

\subsection{Base station utility}

\label{sec:uti}

The utility of a BS is taken as the sum of uplink capacities it
offers to its connected users. We assume that the number of MT is
large enough to be represented by a continuous distribution $\rho
\left( x\right) $. This assumption allows to get the utility for the
$k$-th BS as a continuous sum of the MTs uplink capacities
\begin{equation}
U_{k}(\underline{x})=\int_{S_{k}(\underline{x})}\rho (z)\log \left( 1+\frac{%
(\left|z-x_{k}\right|^{2}+\varepsilon ^{2})^{-\frac{\alpha }{2}}}{{\sigma }^{2}+I_{k}(\underline{x})}%
\right) \text{d}z,  \label{EqUtility1}
\end{equation}
where $S_{k}\left( \underline{x}\right) $ is the subset of the plane where MTs are linked with the $k$-th BS and
$\underline{x}=~^t(x_{1},x_{2},\ldots ,x_{K})$ is the vector of
locations for the set of BSs. This paper will consider only uniform
MT distribution with $\rho (x)=1$.

When considering interferences, a worst-case scenario is considered:
there is no mechanism such as beamforming~\cite{Litva96} to lower
their effects. We consider that there is no interference between MTs
connected with different BSs because of frequency reuse. Then only
interference between MTs of a same BS has to be considered. This
framework is quite similar to the one of~\cite {Altman09}, where two
competing BSs are assumed to use different frequency bands. In our
case, we consider $K$ BSs (with $K\geqslant 1$) and each of them
uses its own frequency band.

Performing SUD at the $k$-th BS, one gets
\begin{equation}
I_{k}(\underline{x})=\int_{S_{k}(\underline{x})}\left( \left| z-x_{k}\right|
^{2}+\varepsilon ^{2}\right) ^{-\frac{\alpha }{2}}\text{d}z.
\label{Eq:Interference}
\end{equation}
Utility of $k$-th BS $\left( \ref{EqUtility1}\right) $ then becomes
\begin{equation}
U_{k}(\underline{x})=\int_{\mathcal{S}_{k}(\underline{x})}\log \left( 1+%
\frac{(\left| z-x_{k}\right| ^{2}+\varepsilon ^{2})^{-\frac{\alpha }{2}}}{{%
\sigma }^{2}+\int_{\mathcal{S}_{k}(\underline{x})}(\left|
z^{\prime }-x_{k}\right|^{2}+\varepsilon ^{2})^{-\frac{\alpha }{2}}\text{d}z^{\prime }}\right) \text{d}z.
\label{EqUtility2}
\end{equation}


\subsection{Utility approximation}


When considering the low SINR regime, the useful power of each MT is
small compared to the interference term in $\left(
\ref{EqUtility2}\right) $. This is especially true for high MT
density. With this assumption, $\left( \ref {EqUtility2}\right) $
may be approximated as
\begin{equation}
\begin{aligned} U_{k}(\underline{x}) &\approx \frac{\int_{\mathcal{S}_k(\underline{x})} (\left|z-x_k\right|
^{2}+\varepsilon^{2})^{-\frac{\alpha}{2}}{\text{d}z}}{{\sigma}^{2}+\int
_{\mathcal{S}_k(\underline{x})}(\left|z^{\prime }-x_k\right|^{2}+\varepsilon^{2})^{-\frac{\alpha}{2}}{\text{d}z^{\prime }}}\\ 
&=\frac{I_k(\underline{x})}{\sigma^2 +
I_k(\underline{x})}. \end{aligned}  \label{Eq:Utility2}
\end{equation}

Note that this simplification makes the considered utility based on capacity
equivalent to a utility based on SINR such as the ones in~\cite{Altman09}.
At low-SINR regime, it is equivalent to work with a capacity-based utility
or a SINR-based utility.

Also note that one has $f\left( t\right) =t/\left( \sigma ^{2}+t\right) $ strictly
increasing over $\left[ 0,L\right] $, since its derivative is $f^{\prime
}\left( t\right) =\sigma ^{2}/\left( \sigma ^{2}+t\right) ^{2}$. Then maximizing the approximation of $U_{k}(\underline{x})$ or $I_k(\underline{x})$ is equivalent. Thus the utility we define for the game is $\widehat{U_{k}}(\underline{x})=I_k(\underline{x})$, $k\in\mathcal{K}$.

\subsection{Definition of the game}

\label{Ssec:GameDef}

In this section, we study the case of $K$ BSs competing on a segment of length $L$. Each of the BSs uses
different carrier frequencies so the MTs of different BSs do not
interfere together. As assumed in Section~\ref{sec:model}, the MT
distribution is uniform over the segment, and the set of possible
locations for the BS is $[0,L]$.
Figure~\ref{street}
illustrates the context for the two-player case.
\begin{figure}[tbp]
\centering
\includegraphics[scale=0.3]{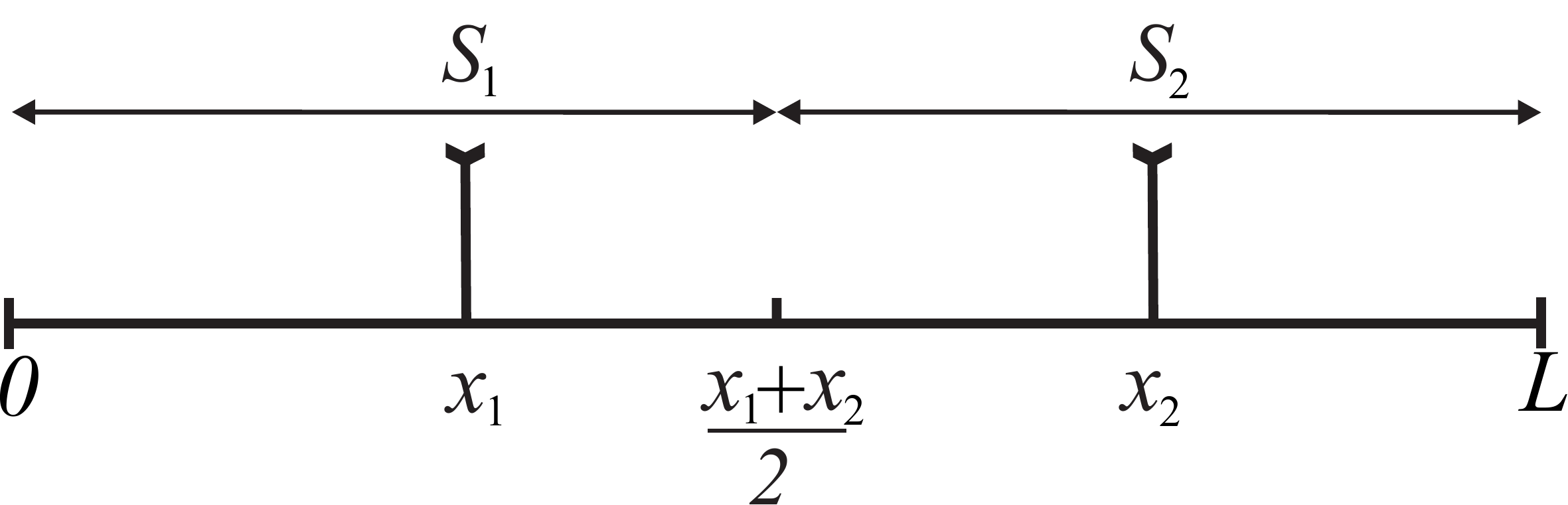}
\caption{Two base stations competing in a one-dimension space}
\label{street}
\end{figure}



\begin{definition}
\label{def:game} The strategic form of the game is given by
\[
\mathcal{G}=(\mathcal{K},{\{\mathcal{A}_{k}\}}_{k\in \mathcal{K}},{\{\widehat{U_k}\}}_{k\in \mathcal{K}})
\]
where

\begin{itemize}
\item  $\mathcal{K}=\{1,\ldots ,K\}$ is the set of players, which are here
BS.

\item  ${\{\mathcal{A}_{k}\}}_{k\in \mathcal{K}}$ is the set of actions
players can consider, here 
\begin{equation}
\label{eq:order}
\mathcal{A}_{k}=\{ x_k \in [0,L]\; |\; 0<x_{1}<\ldots <x_{K}<L\}. 
\end{equation}
Denote $\mathcal{A}=\{ \underline{x} \in [0,L]^K\; |\; 0<x_{1}<\ldots <x_{K}<L\}$.

\item  ${\{\widehat{U_k}\}}_{k\in \mathcal{K}}$ is the set of utilities players
use. 
\end{itemize}
\end{definition}
Note that we are interested in location equilibria that do not superimpose several BSs. Thus, if there exists an equilibrium, there exists a spatial order for BSs at this equilibrium, that is why we introduce this order in the action spaces $\{\mathcal{A}_k\}_{k\in\mathcal{K}}$.



\section{Nash equilibrium analysis}

\label{sec:Nash}

The aim of this section is to show the existence of a Nash equilibrium in
the location game described in Section~\ref{Ssec:GameDef} and to
characterize this equilibrium.

\subsection{Existence}
\label{sec:exi}
In this section, we focus of the existence of a Nash equilibrium in the defined game. 
To prove the existence of a Nash equilibrium, the concavity of $\widehat{U_k}(%
\underline{x})$ with respect to $x_{k}$ over $\mathcal{A}_k$, $\forall k\in \mathcal{K}$, has to
be established.

\newtheorem{lemma}{Lemma}

\begin{lemma}
\label{lemma:concavity} 
$\widehat{U_k}(\underline{x})$ is concave with respect to $x_{k}$ over $\mathcal{A}_k$, $\forall k\in \mathcal{K}$.
\end{lemma}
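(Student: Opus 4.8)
The plan is to establish concavity by computing $\partial^2 \widehat{U_k}/\partial x_k^2 = \partial^2 I_k/\partial x_k^2$ explicitly and showing it is strictly negative on $\mathcal{A}_k$. First I would make the one-dimensional cell structure explicit. Since each MT picks the closest BS and the locations are ordered as in \eqref{eq:order}, the cell of an interior BS $k$ (with $1<k<K$) is the interval $\mathcal{S}_k(\underline{x}) = \left[\tfrac{x_{k-1}+x_k}{2},\, \tfrac{x_k+x_{k+1}}{2}\right]$ bounded by the midpoints to its neighbours, while $\mathcal{S}_1 = \left[0,\, \tfrac{x_1+x_2}{2}\right]$ and $\mathcal{S}_K = \left[\tfrac{x_{K-1}+x_K}{2},\, L\right]$. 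Writing $g(z,x_k) = \left((z-x_k)^2+\varepsilon^2\right)^{-\alpha/2}$, I would record the crucial fact that $g$ depends on $z$ and $x_k$ only through the combination $z-x_k$, so that $\partial g/\partial x_k = -\,\partial g/\partial z$. Because $\varepsilon>0$, the integrand is smooth and bounded (no singularity at $z=x_k$), which justifies differentiating under the integral sign.

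The main computation is the first derivative. For interior $k$, both integration limits move with $x_k$ at rate $\tfrac12$, so Leibniz's rule gives a boundary contribution $\tfrac12\big[g(b,x_k)-g(a,x_k)\big]$ plus $\int_a^b \partial g/\partial x_k\,\text{d}z$, where $a$ and $b$ denote the lower and upper cell boundaries. Using $\partial g/\partial x_k=-\partial g/\partial z$, the remaining integral telescopes to $g(a,x_k)-g(b,x_k)$, and the two contributions combine into the clean expression
\[
\frac{\partial I_k}{\partial x_k} = \tfrac12\Big[g(a,x_k) - g(b,x_k)\Big].
\]
Evaluating $g$ at the midpoints, $a-x_k = -\tfrac12(x_k-x_{k-1})$ and $b-x_k = \tfrac12(x_{k+1}-x_k)$, so setting $\phi(d)=\left(\tfrac{d^2}{4}+\varepsilon^2\right)^{-\alpha/2}$ this reads $\tfrac12\big[\phi(d_1)-\phi(d_2)\big]$, with $d_1=x_k-x_{k-1}$ and $d_2=x_{k+1}-x_k$ the distances to the two neighbours.

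Differentiating once more, and using $\partial d_1/\partial x_k = 1$ and $\partial d_2/\partial x_k = -1$, I obtain $\partial^2 I_k/\partial x_k^2 = \tfrac12\big[\phi'(d_1)+\phi'(d_2)\big]$. Since $\phi'(d) = -\tfrac{\alpha d}{4}\left(\tfrac{d^2}{4}+\varepsilon^2\right)^{-(\alpha+2)/2}$ is strictly negative for every $d>0$ when $\alpha\ge 2$, both terms are negative and the second derivative is negative, yielding strict concavity. The edge players $k\in\{1,K\}$ need a separate but analogous check, because there one cell boundary is fixed (at $0$ or $L$) and does not move with $x_k$; for $k=1$ the same rule gives $\partial I_1/\partial x_1 = g(0,x_1)-\tfrac12 g(b,x_1)$, whose derivative is the sum of $\frac{\text{d}}{\text{d} x_1}g(0,x_1)=-\alpha x_1(x_1^2+\varepsilon^2)^{-(\alpha+2)/2}<0$ (valid since $x_1>0$) and $-\tfrac12\,\frac{\text{d}}{\text{d} x_1}g(b,x_1)=\tfrac12\phi'(d_2)<0$, again negative, with the case $k=K$ following symmetrically.

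The step I expect to be most delicate is the bookkeeping in Leibniz's rule: keeping track of which limits move and at what rate, and recognising the telescoping identity $\partial g/\partial x_k=-\partial g/\partial z$ that collapses the interior integral onto boundary terms. Once that simplification is in place the sign of the second derivative is immediate from $\alpha\ge 2$ and $d_i>0$, but the edge cases $k=1$ and $k=K$ require the separate treatment above, since there the fixed endpoint contributes an extra strictly-negative term in place of a symmetric pair.
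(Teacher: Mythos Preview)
Your proposal is correct and follows essentially the same route as the paper: write each cell explicitly, differentiate under the integral sign (the paper does this after the change of variables $x\mapsto x-x_k$, which is equivalent to your telescoping identity $\partial g/\partial x_k=-\partial g/\partial z$), obtain the same first- and second-order partial derivatives, and conclude negativity from the ordering \eqref{eq:order}. One minor remark: the strict negativity of $\phi'(d)$ for $d>0$ holds for every $\alpha>0$, not only $\alpha\ge 2$, so you need not invoke that hypothesis at that step.
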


The proof of this lemma is in Appendix~\ref{proof:concavity}. Then one has the following theorem.

\begin{theorem}
\label{Th:Nash}In the game defined by Definition~\ref{def:game}, there
exists at least one Nash equilibrium.
\end{theorem}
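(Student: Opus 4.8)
The plan is to derive existence from the classical equilibrium-existence theorem for concave games (Debreu's social-equilibrium theorem, or equivalently the Debreu--Glicksberg--Fan and Rosen results), which is the standard tool once own-concavity has been established. That theorem delivers a pure Nash equilibrium as soon as three conditions hold: (i) each player's strategy set is a nonempty, compact, convex subset of a Euclidean space (a version allowing the sets to depend continuously on the opponents' actions covers our coupled ordering constraint); (ii) every payoff $\widehat{U_k}$ is continuous in the full profile $\underline{x}$; and (iii) each $\widehat{U_k}$ is quasi-concave, and in particular concave, in the own coordinate $x_k$. Condition (iii) is precisely Lemma~\ref{lemma:concavity}, so the work reduces to checking (i) and (ii) and to coping with one structural feature of $\mathcal{A}$.

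For (i), I would first note that the joint region $\mathcal{A}=\{\underline{x}\in[0,L]^K\mid 0<x_{1}<\ldots<x_{K}<L\}$ is convex, being an intersection of the half-spaces $x_{1}>0$, $x_{k}-x_{k-1}>0$ and $x_{K}<L$; for a fixed profile of the opponents, player $k$'s feasible set is the open interval $(x_{k-1},x_{k+1})$ (with the conventions $x_{0}=0$, $x_{K+1}=L$), again convex. The genuine obstacle is that these inequalities are strict, so $\mathcal{A}$ is open and therefore not compact, and the existence theorem cannot be applied on $\mathcal{A}$ as stated. The natural remedy is to work on the closure $\overline{\mathcal{A}}=\{\underline{x}\in[0,L]^K\mid 0\leqslant x_{1}\leqslant\ldots\leqslant x_{K}\leqslant L\}$, which is closed, bounded---hence compact---and still convex.

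For (ii), I would verify that $\widehat{U_k}=I_k$, defined in \eqref{Eq:Interference} as the integral of the continuous, bounded integrand $\left(\left|z-x_k\right|^{2}+\varepsilon^{2}\right)^{-\alpha/2}$ over the cell $S_k(\underline{x})$, is continuous on $\overline{\mathcal{A}}$. Since each cell boundary is the midpoint of two consecutive BSs, the limits of integration move continuously with $\underline{x}$, and continuity of $I_k$ then follows by dominated convergence (or by the differentiation-under-the-integral argument already needed for Lemma~\ref{lemma:concavity}). The only delicate point is the boundary of $\overline{\mathcal{A}}$, where two BSs coincide and a cell degenerates; there $I_k$ still extends continuously because the degenerating cell shrinks to a set of measure zero.

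With (i)--(iii) verified on $\overline{\mathcal{A}}$, the theorem produces an equilibrium $\underline{x}^{\star}\in\overline{\mathcal{A}}$ of the game played on the closure. Because $\mathcal{A}\subseteq\overline{\mathcal{A}}$, any such $\underline{x}^{\star}$ that happens to lie in the interior $\mathcal{A}$ is automatically a Nash equilibrium of the original game, so the remaining---and in my view most demanding---step is to show that $\underline{x}^{\star}\in\mathcal{A}$, i.e.\ that at equilibrium no BS sits at $0$ or $L$ and no two BSs coincide. I would establish this by a marginal-deviation argument, combining Lemma~\ref{lemma:concavity} with an evaluation of the one-sided derivative of $x_k\mapsto\widehat{U_k}$ at the degenerate configurations. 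The endpoint cases $x_{1}=0$ and $x_{K}=L$ follow from a direct Leibniz-rule computation showing the derivative points strictly into $\mathcal{A}$, so the concave best response cannot be attained there; the coincidence case $x_{k-1}=x_k$ requires a joint argument on the pair of adjacent BSs and is where I expect the calculation to be most involved. Carrying this out confirms $\underline{x}^{\star}\in\mathcal{A}$ and completes the proof.
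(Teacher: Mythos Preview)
Your approach is the same as the paper's: verify the three hypotheses of Rosen's existence theorem (own-concavity from Lemma~\ref{lemma:concavity}, joint continuity of the $\widehat{U_k}$, and compactness/convexity of the action sets) and conclude. The paper's proof is exactly this three-line checklist.

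Where you differ is that you are more scrupulous than the paper on one point. The paper simply asserts that ``the set of feasible actions is compact and convex,'' yet $\mathcal{A}$ is defined in \eqref{eq:order} by \emph{strict} inequalities and is therefore open, not compact. Your closure-then-interior workaround (apply Rosen on $\overline{\mathcal{A}}$, then argue via the one-sided derivatives that no equilibrium can sit on the boundary) is the standard and correct remedy; the boundary exclusion is indeed routine here, since at $x_{k-1}=x_k$ player~$k$'s cell is Lebesgue-null, giving $\widehat{U_k}=0$, while any strictly interior move yields $\widehat{U_k}>0$, and the endpoint cases follow from the sign of $\partial\widehat{U_1}/\partial x_1$ at $x_1=0$ (respectively $\partial\widehat{U_K}/\partial x_K$ at $x_K=L$) in \eqref{eq:deriv}. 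So your plan is sound and, on this technicality, more complete than the paper's own argument.
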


\begin{proof}
\begin{itemize}
\item Using Lemma~\ref{lemma:concavity}, we know that $\widehat{U_k}\left(
\underline{x}\right) $ is concave with respect to $x_{k}$ over $\mathcal{A}_k$, $\forall k\in \mathcal{K}$,

\item $\widehat{U_k}\left(
\underline{x}\right) $ is continuous with respect to $\underline{x}$ over $\mathcal{A}$, $\forall k\in \mathcal{K}$,

\item the set of feasible actions is compact and convex for all players in the game.
\end{itemize}
The Rosen~\cite{Rosen1965} conditions for the existence of a Nash equilibrium are met and Theorem~
\ref{Th:Nash} is thus proved.
\end{proof}


\subsection{Multiplicity of NE}

Regarding to the uniqueness of the Nash equilibrium, as the
characteristics of the BSs (height and noise) are assumed to be
identical, it is interesting to note that permuting the order of BSs
leads to a symmetric system of
equation. Thus, without condition on the order of BSs as in (\ref{eq:order}%
), there are $K!$ Nash equilibria for the game and all these
equilibria are symmetric, meaning that the set of locations at
equilibrium is unique. However, if one imposes the condition order
(\ref{eq:order}), the NE can be shown to be unique by using the Diagonally Strict Concavity (DSC) condition~\cite{Rosen1965}. 

\begin{theorem}
\label{th:unicity}
In the game defined by Definition~\ref{def:game}, there
exists one single Nash equilibrium.
\end{theorem}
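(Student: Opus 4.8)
\emph{Proof plan.} The strategy is to verify Rosen's Diagonally Strict Concavity (DSC) condition~\cite{Rosen1965} on the convex action set $\mathcal{A}$ and to conclude uniqueness from it, combined with the existence already given by Theorem~\ref{Th:Nash}. Writing $\phi(u)=(u^{2}+\varepsilon^{2})^{-\alpha/2}$ and $\delta_{k}=(x_{k+1}-x_{k})/2$, I collect the own-variable derivatives into the pseudo-gradient $g(\underline{x})={}^{t}\!\left(\partial\widehat{U_1}/\partial x_{1},\ldots,\partial\widehat{U_K}/\partial x_{K}\right)$. Because the BSs are physically identical, I would take uniform Rosen weights $r_{k}=1$; it then suffices to show that $G(\underline{x})+{}^{t}G(\underline{x})$ is negative definite at every $\underline{x}\in\mathcal{A}$, where $G$ is the Jacobian of $g$.

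First I would compute $g$. In the ordered one-dimensional configuration the cell of BS $k$ is bounded by the midpoints $(x_{k-1}+x_{k})/2$ and $(x_{k}+x_{k+1})/2$, with the fixed endpoints $0$ and $L$ for the extreme players. Differentiating $\widehat{U_k}=I_{k}$ with respect to $x_{k}$ by the same moving-boundary computation used for Lemma~\ref{lemma:concavity} (in which the boundary and integrand contributions partly cancel) yields, for an interior player,
\[
\frac{\partial\widehat{U_k}}{\partial x_{k}}=\tfrac12\,\phi(\delta_{k-1})-\tfrac12\,\phi(\delta_{k}),
\]
the end players carrying instead the fixed-boundary terms $\phi(x_{1})$ and $\phi(L-x_{K})$. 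Since $\partial\widehat{U_k}/\partial x_{k}$ depends only on $x_{k-1},x_{k},x_{k+1}$, the Jacobian $G$ is tridiagonal.

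The key structural observation is that $G$ is in fact symmetric: both $G_{k,k+1}=\partial^{2}\widehat{U_k}/\partial x_{k}\partial x_{k+1}$ and $G_{k+1,k}=\partial^{2}\widehat{U_{k+1}}/\partial x_{k+1}\partial x_{k}$ equal $-\tfrac14\phi'(\delta_{k})$, because the shared cell boundary enters both neighbouring utilities through the same half-distance $\delta_{k}$. Equivalently, $g$ derives from a potential, so the game is an exact potential game, which is precisely what legitimises the uniform weights. Hence $G+{}^{t}G=2G$ and DSC reduces to the negative definiteness of $G$, i.e.\ to the positive definiteness of $-G$. The matrix $-G$ is symmetric and tridiagonal with strictly positive diagonal and nonzero off-diagonals: $\phi'(u)<0$ for $u>0$, and the strict ordering $0<x_{1}<\cdots<x_{K}<L$ keeps every $\delta_{k}$, $x_{1}$ and $L-x_{K}$ positive, so $-G$ is irreducible. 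For an interior row the diagonal $-\tfrac14\!\left(\phi'(\delta_{k-1})+\phi'(\delta_{k})\right)$ equals exactly the sum $\tfrac14|\phi'(\delta_{k-1})|+\tfrac14|\phi'(\delta_{k})|$ of the magnitudes of its two off-diagonal entries, so these rows are weakly dominant with equality; the first and last rows are strictly dominant, since the fixed endpoints contribute the extra positive terms $|\phi'(x_{1})|$ and $|\phi'(L-x_{K})|$ to their diagonals. A symmetric, irreducibly diagonally dominant matrix with positive diagonal is positive definite, so $-G$ is positive definite, the DSC condition holds on $\mathcal{A}$, and Rosen's theorem gives a unique equilibrium.

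I expect the main obstacle to be this last step. The interior rows are only diagonally dominant with equality, so negative semidefiniteness is immediate but strictness is not: the argument must combine irreducibility with the strict dominance supplied by the two fixed endpoints to upgrade the conclusion to genuine negative definiteness (absent the endpoint terms, $-G$ would be singular and uniqueness could fail). A secondary point to check carefully is the cross-partial symmetry of $G$, since it is what allows the uniform choice of weights rather than forcing a search for suitable $r_{k}$.
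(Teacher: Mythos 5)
Your proposal is correct, but it reaches the DSC condition by a different route than the paper. Both arguments sit inside Rosen's framework: the paper verifies the diagonally strict concavity inequality \emph{directly}, writing the difference of own-partials at $\underline{a}$ and $\underline{a}'$ in terms of $g(a,b)=(\varepsilon^2+a^2)^{-\alpha/2}-(\varepsilon^2+b^2)^{-\alpha/2}$, regrouping the sum $\sum_k (a'_k-a_k)\bigl(\partial_k\widehat{U_k}(\underline{a})-\partial_k\widehat{U_k}(\underline{a}')\bigr)$ telescopically so that each cell boundary (including the two fixed endpoints $0$ and $L$) contributes one term of the form $(b-a)g(a,b)$, and concluding from the strict monotonicity of $u\mapsto(\varepsilon^2+u^2)^{-\alpha/2}$ on the relevant sign-constrained domain. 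You instead verify Rosen's \emph{second-order} sufficient condition, showing that the Jacobian $G$ of the pseudo-gradient is symmetric (so the game is an exact potential game, which justifies unit weights), tridiagonal, and that $-G$ is irreducibly diagonally dominant with strict dominance only in the first and last rows, hence positive definite. Your computations of the own- and cross-partials check out against equation (\ref{eq:deriv}), and you correctly flag the one genuinely delicate point: the interior rows are dominant only with equality, so strict definiteness must be imported from the endpoint terms $\phi'(x_1)$, $\phi'(L-x_K)$ via irreducibility --- exactly the role the terms $(a'_1-a_1)g(a_1,a'_1)$ and $(L-a'_K-(L-a_K))g(L-a_K,L-a'_K)$ play in the paper's first-order version. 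What each approach buys: the paper's telescoping argument is global and avoids any appeal to matrix theory, while yours is more systematic (the potential/tridiagonal structure explains \emph{why} DSC holds and would generalize mechanically to other cell-boundary utilities), at the cost of invoking the standard result that a symmetric, irreducibly diagonally dominant matrix with positive diagonal is positive definite. One caveat you inherit from the paper rather than introduce: Rosen's theorem is stated for a compact convex strategy set, whereas $\mathcal{A}$ as defined by the strict ordering is open; a fully rigorous treatment would work on a closed subset bounded away from coincident locations, where your strict inequalities $\delta_k>0$, $x_1>0$, $L-x_K>0$ survive.
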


The proof of this theorem is in Appendix~\ref{proof:unicity}.
Having uniqueness under this order condition might seem to be a weak result in
comparison with a general uniqueness result. However, in the
framework of a dynamic process, the initial locations of the base
station might suffice to determine the effectively observed NE
(after convergence).

\subsection{Determination of the NE}

\label{sec:nash_carac}

A characterization of the equilibrium is provided in this section with
examples for small values of $K$. 
A real solution $\underline{x}$ for $\alpha >2$
has to satisfy
\begin{equation}
\label{eq:charac}
\left\{
\begin{array}{l}
x_{1}=\frac{2\sqrt{(5.2^{\frac{2}{\alpha }}-2^{\frac{4}{\alpha }%
}-4)\varepsilon ^{2}+2^{\frac{2}{\alpha }}x_{2}^{2}}-2^{\frac{2}{\alpha }%
}x_{2}}{4-2^{\frac{2}{\alpha }}}, \\
x_{k}=\frac{x_{k+1}+x_{k-1}}{2},\;\forall k\in \{2,\ldots ,K-1\}, \\
x_{K}=\frac{-2\sqrt{(5.2^{\frac{2}{\alpha }}-2^{\frac{4}{\alpha }%
}-4)\varepsilon ^{2}+2^{\frac{2}{\alpha }}(L-x_{K-1})^{2}}-2^{\frac{2}{%
\alpha }}x_{K-1}+4L}{4-2^{\frac{2}{\alpha }}}.
\end{array}
\right.
\end{equation}
See Appendix~\ref{proof:charac} for detailed derivations. In the case $%
\alpha =2$, one obtains
\begin{equation}
\label{eq:charac2}
\left\{
\begin{array}{l}
x_{1}=\sqrt{2\varepsilon ^{2}+2x_{2}^{2}}-x_{2}, \\
x_{k}=\frac{x_{k+1}+x_{k-1}}{2},\;\forall k\in \{2,\ldots ,K-1\}, \\
x_{K}=-\sqrt{10\varepsilon ^{2}+2(L-x_{K-1})^{2}}-x_{K-1}+2L.
\end{array}
\right.
\end{equation}

In the $2$-player game with $\alpha =2$, one gets
\begin{equation}
\left\{
\begin{array}{l}
x_{1}^{\text{ne}}=L-\frac{1}{2}\sqrt{2L^{2}-4\varepsilon ^{2}} \\
x_{2}^{\text{ne}}=\frac{1}{2}\sqrt{2L^{2}-4\varepsilon ^{2}}
\end{array}
\right.
\end{equation}

Figure~\ref{fig:nashequi} illustrates equilibria for $\alpha =2$ and $\alpha
=3$ for respectively two, three, and four BSs.

\begin{figure}[tbp]
\centering
\includegraphics[scale=0.65]{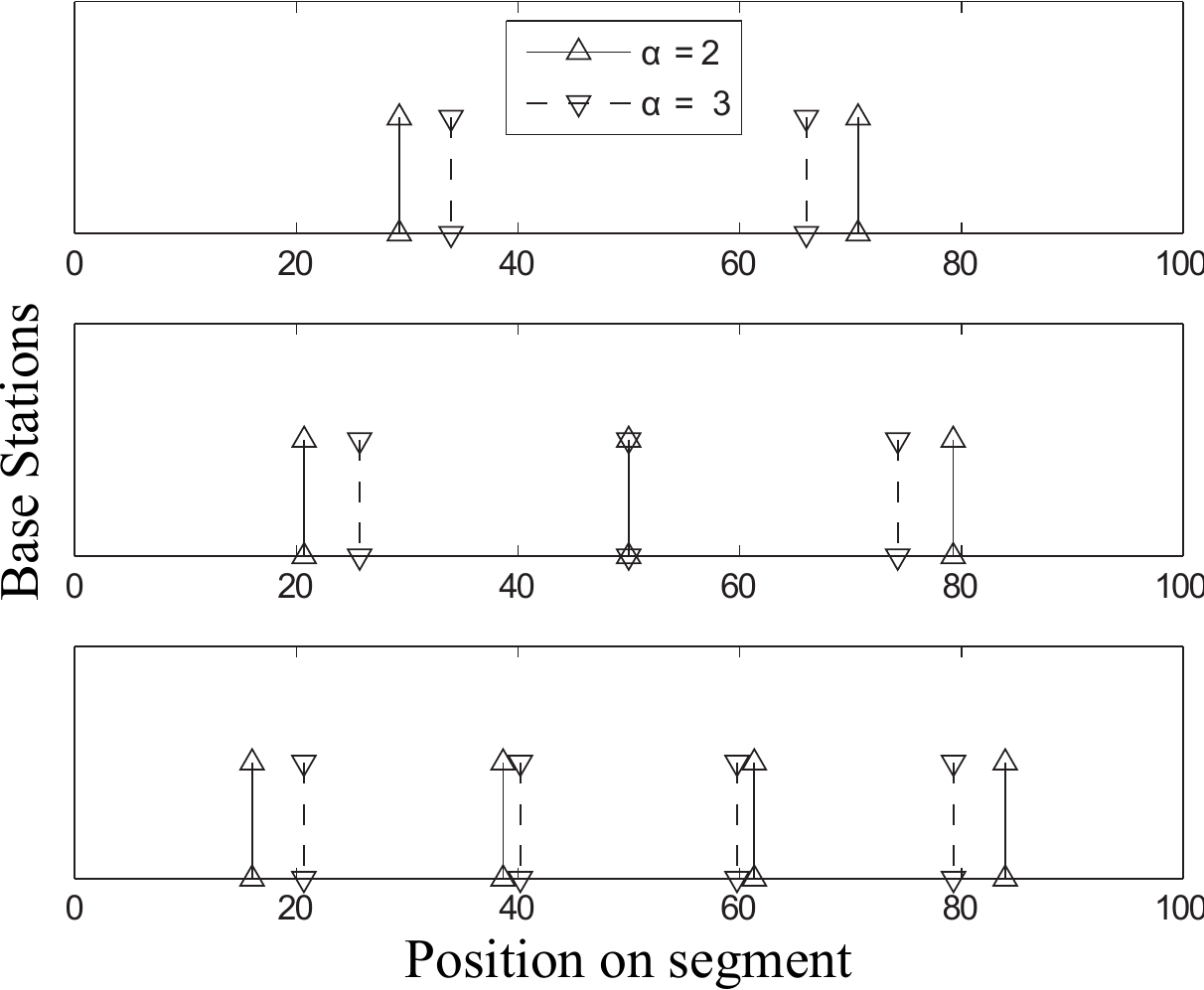}
\caption{Base Stations at Nash equilibrium for $\alpha = 3$ (dotted line)
and $\alpha = 2$ (continuous line). From the top to the bottom,
configuration with 2 BSs, 3 BSs and 4 BSs.
Parameters are $\varepsilon = 0.1$ and $L=100$.}
\label{fig:nashequi}
\end{figure}

\section{Comparing equilibria}

\label{sec:comparison}

This section compares the Nash equilibrium described in Section~\ref
{sec:Nash}, the Stackelberg equilibrium, and the social optimum.

\subsection{Stackelberg equilibrium}

\label{sec:Stackelberg}

As written in Section~\ref{sec:Nash}, the pure Nash equilibrium needs the
players to know their spatial order to be reached. If players do not know
this order but play in a chronological order, the problem changes. If the
first player plays alone at the first stage of the game knowing that other
players will place their BSs after, it turns into a Stackelberg game~\cite
{Stack} with the first player being the leader of the game. This idea is
illustrated with a two-player game with one leader (BS $1$) and one follower
(BS $2$). The leader chooses its position knowing that the follower will
place itself after. Both BSs still want to maximize their utilities and BS $%
1$ knows this point. Then, BS $1$ knows how BS $2$ is going to be placed
regarding to its own position, it is simply the best response of BS $2$
\[
x_{2}(x_{1})=\frac{-2\sqrt{(5.2^{\frac{2}{\alpha }}-2^{\frac{4}{\alpha }%
}-4)\varepsilon ^{2}+2^{\frac{2}{\alpha }}(L-x_{1})^{2}}-2^{\frac{2}{\alpha }%
}x_{1}+4L}{4-2^{\frac{2}{\alpha }}}.
\]

Then BS $1$ places itself at a location $x_{1}$ solution of
\begin{equation}
\frac{\partial \widehat{U_1}}{\partial x_{1}}(x_{1},x_{2}(x_{1}))=0.
\end{equation}

With $\alpha =2$ and neglecting $\varepsilon $ compared to every other
lengths of the problem, one gets
\begin{equation}
x_{1}^{\text{se}}=\biggl(1-\sqrt{2}+\sqrt{2-\sqrt{2}}\biggr)L,
\end{equation}
and
\begin{equation}
x_{2}^{\text{se}}=\biggl((\sqrt{2}-1)(1+\sqrt{2-\sqrt{2}})\biggr)L.
\end{equation}

On Figure~\ref{fig:stacknash}, we compare the utilities of the leader and the follower of the Stackelberg game versus the utilities of the Nash equilibrium. As we see, the leader of the Stackelberg game has a better utility than what he would have get at the Nash equilibrium. On the contrary, the follower has a worst utility. Hence, in a mobile operator point of view, it is more interesting to deploy its BS first.

\begin{figure}[tbp]
\hspace{-0.5cm}
\includegraphics[scale=0.45]{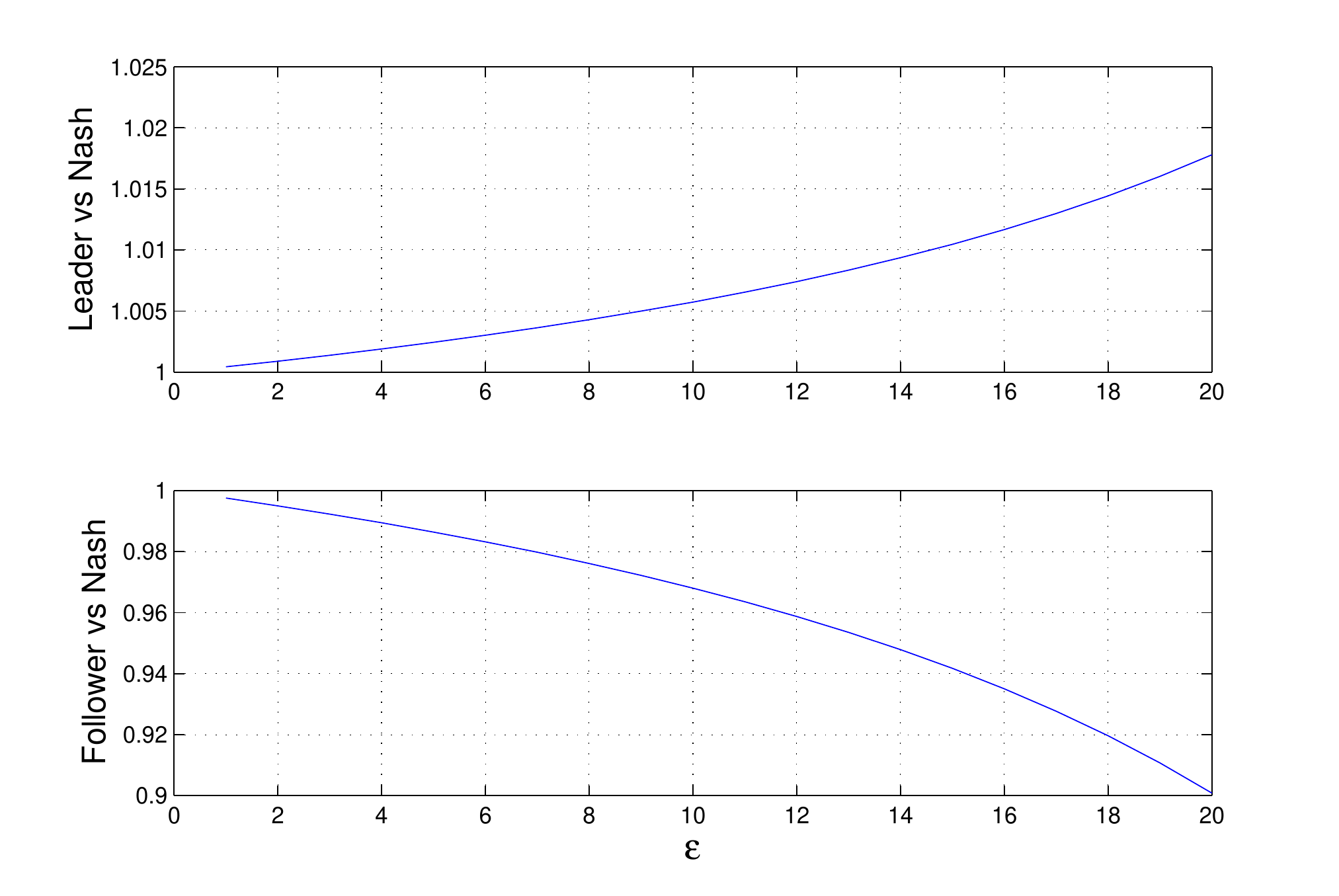}
\caption{$\frac{\widehat{U_1}^{\text{SE}}}{\widehat{U_1}^{\text{NE}}}$ and $\frac{\widehat{U_2}^{\text{SE}}}{\widehat{U_2}^{\text{NE}}}$ with regard to $\varepsilon$.}
\label{fig:stacknash}
\end{figure}

\subsection{Social optimum}

Sections~\ref{sec:Nash} and~\ref{sec:Stackelberg} provide equilibria
corresponding to situations where the BSs only consider what is best
for themselves. But these equilibria are not the best for the MTs.
In the MTs
point-of-view, the utility to consider is the complete utility sum or \emph{%
social utility}. In a $2$-BS case
\begin{equation}
\widehat{U}^{\text{so}}\left( x_1,x_2\right) =\widehat{U_1}\left( x_1,x_2\right) +
\widehat{U_2}\left( x_1,x_2\right) . \label{Eq:SocialUtility}
\end{equation}

For this utility, we know that there exists an optimum since the strategy
set $\mathcal{A}$ is compact and $\widehat{U_1}\left( x_1,x_2\right) + \widehat{U_2}\left( x_1,x_2\right)$ is continuous with respect to $(x_1,x_2)$ over $\mathcal{A}$. At
this optimum, it is proven \cite{Altman09} that

\begin{itemize}
\item  {$(i)$ the BSs place themselves at the middle of their
associated subset of segment (respectively $\mathcal{S}_1$ and $\mathcal{S}_2$),}

\item  {$(ii)$ the frontier between the two BSs is the middle of the segment.%
} 
\end{itemize}

Thus, we have the optimum
\begin{equation}
x_{1}^{\text{so}}=\frac{L}{4}\mbox{, }x_{2}^{\text{so}}=\frac{3L}{4}.
\end{equation}

Figure~\ref{fig:Nash_stack_opti} shows the Nash equilibrium, the Stackelberg
equilibrium, and the social optimum for $\alpha =2$. The locations of BSs
for Nash equilibrium and social optimum are symmetric with respect to the
middle of the segment $\left[ 0,L\right] $ whereas this is not the case for
the Stackelberg equilibrium.
\begin{figure}[tbp]
\centering
\includegraphics[scale=0.42]{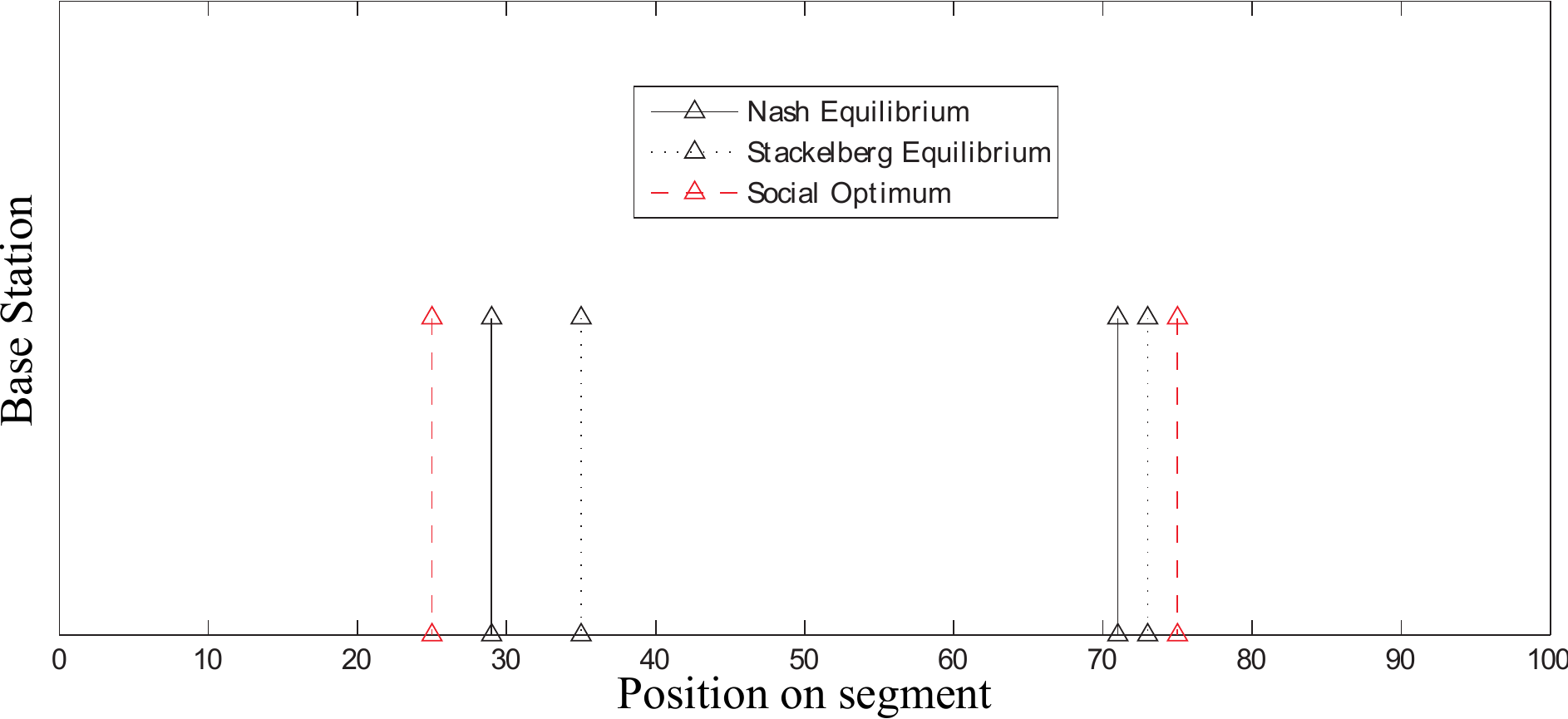}
\caption{Nash Equilibrium, Stackelberg Equilibrium, and Social Optimum for $%
\alpha =2$.}
\label{fig:Nash_stack_opti}
\end{figure}

\subsection{Price of anarchy}

To compare the Nash and the Stackelberg equilibrium in the two-player case,
we look at the utilities at equilibrium. Without any pricing mechanism, the
Nash equilibrium and the social optimum are very close in terms of
locations. As a result, they are also close in terms of utility sum. The
\emph{Price of Anarchy} (PoA), introduced in~\cite{Koutsoupias99worst-caseequilibria}, is an adequate
metric to compare these sums
\begin{equation}
PoA(eq)=\frac{\max_{\underline{x}\in \mathcal{A}}\sum_{k\in \mathcal{K}}\widehat{U}_{k}(\underline{x})}{\sum_{k\in
\mathcal{K}}\widehat{U}_{k}(\underline{x}^{\text{eq}})}.
\end{equation}

Note that the PoA is always stronger than $1$ and if the PoA is high, it means that the corresponding equilibrium is not that efficient in terms of utility sum. On the contrary, if the Po1 is close to one, the corresponding equilibrium is satisfying.
Figure~\ref{PoA} illustrates the behavior of the PoA for the Nash
equilibrium and the PoA for the Stackelberg equilibrium as a function of $%
\varepsilon $ for $2$-player case. It appears that the Stackelberg equilibrium is less efficient than the Nash equilibrium.


\begin{figure}[tbp]
\centering
\includegraphics[scale=0.48]{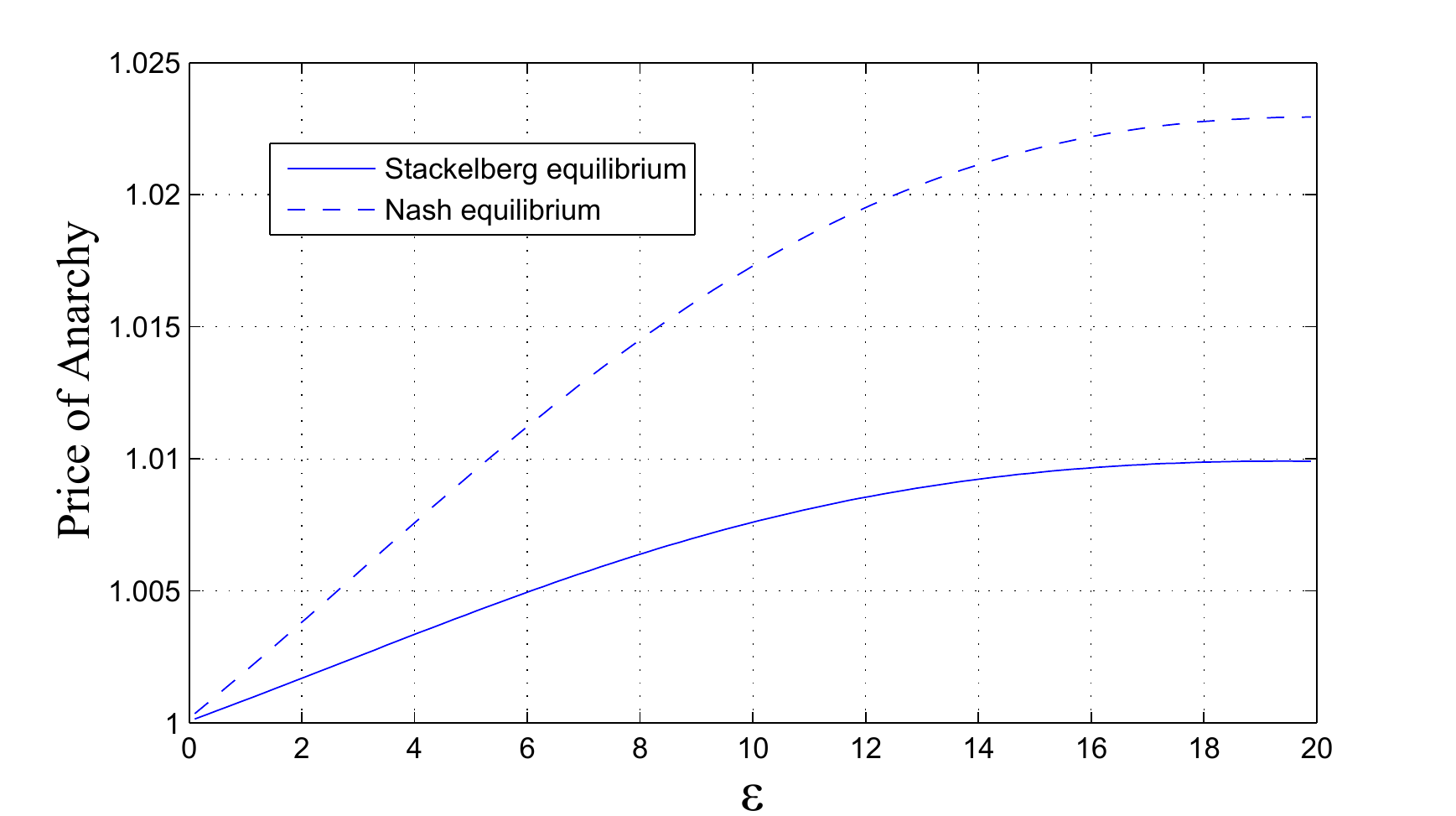}
\caption{Prices of Anarchy regarding to $\varepsilon$. Parameters are $L=100$%
, $\sigma^2 = 10^4$, $\alpha = 2$.}
\label{PoA}
\end{figure}

\section{Convergence to the Nash equilibrium}

\label{sec:dynamic}

Getting to the equilibrium given in~Section~\ref{sec:nash_carac} is not
simple: it would require that every player knows the utilities of other
players. In practice, this is hardly the case. Thus, we present techniques
that enable to reach the Nash equilibrium in a decentralized way:
best-response dynamics and reinforcement learning. However the BSs need to
be movable to perform these two techniques, hence it is more accurate to
talk about Mobile Stations (MSs) in the present section. Note that for these two techniques, the only assumption about MSs location is that they cannot superimpose.

\subsection{Best response dynamics}

\label{sec:bestrep}
The context of this section remains the same as defined in Section~\ref{Ssec:GameDef}.

The principle of best-response dynamics is that given a realization of
actions for its opponents, every player of the game is able to compare its
own possible actions and choose which one is best for itself. Precisely the
best-response algorithm is the following

\begin{enumerate}
\item  At every time step $t$, each MS $k$ chooses its location $x^{\text{br}}_{k}(t)$
according to
\[
x^{\text{br}}_{k}(t)=\arg \max_{x_{k}}\,\hat{U}_{k}(x_{k},\underline{x}_{-k}(t)).
\]

\item  Algorithm stops when $\left|\underline{x}^{\text{br}}(t+1)-\underline{x}^{\text{br}}(t)\right| < \beta$, with $\beta$ fixed.
\end{enumerate}

Regarding to the convergence of this algorithm, we study the system of best-responses (\ref{eq:charac}).
If we neglect $\varepsilon$ regarding to $x_1$ and $L-x_{K-1}$, one gets a linear system of equations
\begin{equation}
\left\{
\begin{array}{l}
x_{1}=\frac{2^{\frac{1+\alpha}{\alpha }}-2^{\frac{2}{\alpha }%
}}{4-2^{\frac{2}{\alpha }}}x_{2}, \\
x_{k}=\frac{x_{k+1}+x_{k-1}}{2},\;\forall k\in \{2,\ldots ,K-1\}, \\
x_{K}=\frac{(4-2^{\frac{1+\alpha}{\alpha}})L}{4-2^{\frac{2}{\alpha }}} +\frac{2^{\frac{1+\alpha}{\alpha }}-2^{\frac{2}{\alpha }%
}}{4-2^{\frac{2}{\alpha }}}x_{K-1} .
\end{array}
\right.
\end{equation}
It is a linear Cournot oligopoly~\cite{Cournot1838}.

One distinguishes two ways this algorithm can run.

\emph{Simultaneous best response dynamics.} At every step, every MS
adapt their locations simultaneously. In this case, the evolution of
the locations with the algorithm can be expressed by
\begin{equation}
\underline{x}^{\text{br}}(t+1) = \underline{a} + \text{A}_K\underline{x}^{\text{br}}(t)
\end{equation}
with $\underline{a} = ~^t\bigl(0,\ldots,0,\frac{(4-2^{\frac{1+\alpha}{\alpha}})L}{4-2^{\frac{2}{\alpha }}}\bigg)$ and
\begin{equation}
\text{A}_K = \begin{pmatrix}
0 & g(\alpha) & 0 & \hdotsfor{3} & 0 \\
\frac12 & 0 & \frac12 & \ddots & \; & \; & \vdots \\
0 & \ddots & \ddots & \ddots & \ddots & \; & \vdots \\
\vdots & \ddots & \ddots & \ddots & \ddots & \ddots & \vdots \\
\vdots & \; & \ddots & \ddots & \ddots & \ddots & 0 \\
\vdots & \; & \; & \ddots & \frac12 & 0 & \frac12 \\
0 & \hdotsfor{3} & 0  & g(\alpha) & 0
\end{pmatrix},
\end{equation}
with $g(\alpha) = \frac{2^{\frac{1+\alpha}{\alpha }}-2^{\frac{2}{\alpha }}}{4-2^{\frac{2}{\alpha }}}$.

As recalled in Lemma 2.8 of~\cite{Varga2000}, for an irreducible positive square matrix A$=(a_{ij})_{1\leq i \leq n}$, then either
\begin{equation}
\sum_{j=1}^{n}{a_{ij}} = \rho(\text{A})\; \forall i\in [1,n],
\end{equation}
or
\begin{equation}
\inf_{1\leq i \leq n}  \sum_{j=1}^{n}{a_{ij}} < \rho(\text{A}) < \sup_{1\leq i \leq K}  \sum_{j=1}^{n}{a_{ij}},
\end{equation}
with $\rho(\text{A})$ being the radius of A.
In our case, $\text{A}_K$ can be verified to be positive and irreducible, then one has
\begin{equation}
g(\alpha) < \rho(\text{A}_K) < 1.
\end{equation}
Hence, the convergence of the algorithm is ensured.

\emph{Sequential best response dynamics.} The other case is the
sequential algorithm: at every step, only one MS adapts its
location. Depending on the MS $k$ adapting its location, the algorithm evolves according to

\begin{equation}
\underline{x}^{\text{br}}(t+1) = \underline{a} + \text{A}^k_K\underline{x}^{\text{br}}(t),
\end{equation}
with $\text{A}^k_K$ corresponding to the adaptation of the $k$-th MS. $\text{A}^k_K$ may have several forms. 
\begin{itemize}
\item If the $k$-th MS, $k \in \{2,\ldots,K-1\}$ adapts its location, then
\[
\text{A}^k_K = \begin{pmatrix}
1 & 0 & \hdotsfor{4} & 0 \\
\ddots & \ddots & \ddots & \; & \; & \; & \vdots \\
\vdots & 0 & 1 & 0 & \; & \; & \vdots \\
\vdots & \; & \frac12 & 0 & \frac12 & \; & \vdots \\
\vdots & \; & \; & 0 & 1 & 0 & \vdots \\
\vdots & \; & \; & \; & \ddots & \ddots & \ddots \\
0 & \hdotsfor{4}  & 0 & 1 
\end{pmatrix}
\]
In this case, $\text{A}^k_K$ is irreducible and positive and
\begin{equation}
\rho(\text{A}^k_K) = 1.
\end{equation}
\item If the first MS adapts its location, the matrix has the form
\[
\text{A}^1_K = \begin{pmatrix}
0 & g(\alpha) & 0 & \hdots \\
0 & 1 & 0 &  \; \\
\vdots & \ddots & \ddots & \ddots \\
0 & \hdots & 0 & 1 
\end{pmatrix}
\]
Again, $\text{A}^1_K$ is irreducible and positive, but this time
\begin{equation}
g(\alpha) < \rho(\text{A}^1_K) < 1.
\end{equation}
Note that if the $K$-th MS adapts its location, the same reasoning can be done.
\begin{equation}
g(\alpha) < \rho(\text{A}^K_K) < 1.
\end{equation}

\end{itemize}
Hence, $\rho(\prod_{k=1}^{K}{\text{A}^k_K}) = \prod_{k=1}^{K}{\rho(\text{A}^k_K)} < 1 $ and the algorithm converges.

In Figure~\ref{best_responses}, we compare MSs $1$ and $2$ sequential best-responses for $\alpha = 2$.

\begin{figure}[tbp]
\hspace{-0.7cm}
\includegraphics[scale=0.45]{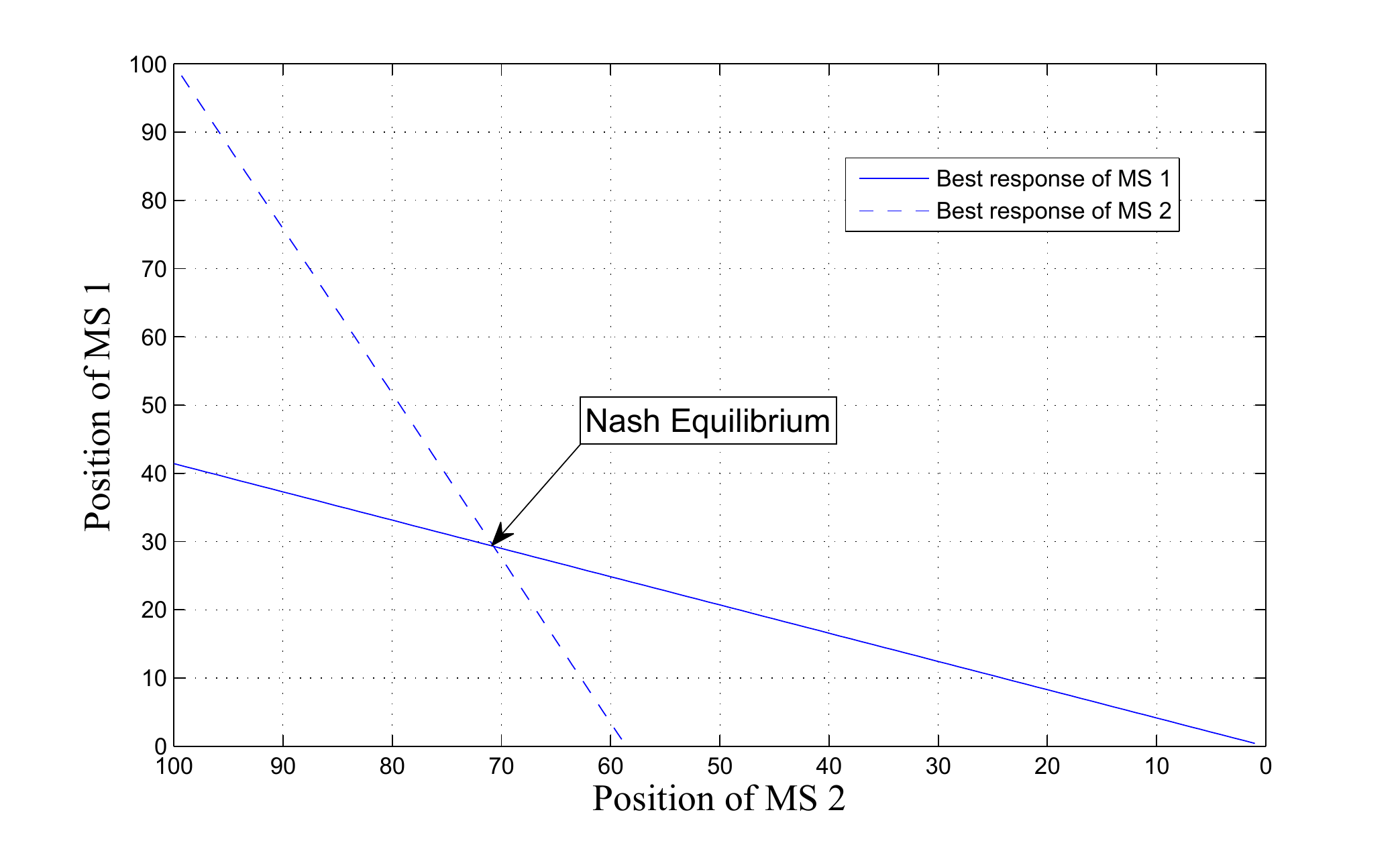}
\caption{Best-responses of MS $1$ and MS $2$ for $\alpha = 2$}
\label{best_responses}
\end{figure}

\subsection{Reinforcement Learning}

\label{sec:learning}
The following notations are used in this section. For a set of MSs $\mathcal{K}=\{1,\ldots ,K\}$, let $%
\mathcal{\tilde{A}}_{k}=\{y_{k1},\ldots ,y_{km_{k}}\}$ be the set of possible
locations for MS $k$ ($m_{k}$ being the cardinality of $\mathcal{\tilde{A}}_{k}$),
which corresponds to a discretization of the set $\left[ 0,L\right] $.

We implement a discrete stochastic
learning algorithm in the sense that the action space is discrete and at
every step, actions are chosen in a stochastic way. The only information
available to a MS is the value of its utility after each iteration (note
that the MS does not necessarily knows its utility expression).

We define $\underline{p}_{k}(t)=(p_{k1}(t),\ldots ,p_{km_{k}}(t))$ the
probability distribution vector of MS $k$ at time $t$.
\begin{equation}
\text{P}\left[ x_{k}(t)=y_{ki}\right] =p_{ki}(t),\;i\in \{1,\ldots ,m_{k}\}.
\end{equation}

The algorithm used by each MS is then the following.

\begin{enumerate}
\item  Initialize the distribution probability vector.

$\forall k\in \mathcal{K},\;\forall i\in \{1,\ldots ,m_{k}\},\;p_{ki}(0)=%
\frac{1}{m_{k}}.$

\item  At every step $t$, each MS $k$ chooses a location $x_{k}(t)$
according to its probability vector $\underline{p}_{k}(t)$.

\item  Each MS gets $U_{k}(t)$. 

\item  Each MS updates its probability distribution vector $\underline{p}%
_{k}(t)$
\begin{equation}
\begin{array}{ll}
p_{ki}(t+1)=p_{ki}(t)-bU_{k}(t)p_{ki}(t)\text{, if }x_{t}(t)\neq y_{ki} &
\\
p_{ki}(t+1)=p_{ki}(t)+bU_{k}(t)\sum_{s\neq i}{p_{k}s}(t)\text{, if }%
x_{k}(t)=y_{ki} &
\end{array}
\end{equation}

\item  Algorithm stops when $\underline{p}_{k}(t+1)=\underline{p}_{k}(t)$,
else go to step $2$.
\end{enumerate}

\begin{figure}[tbp]
\hspace{-0.5cm} \includegraphics[scale=0.55]{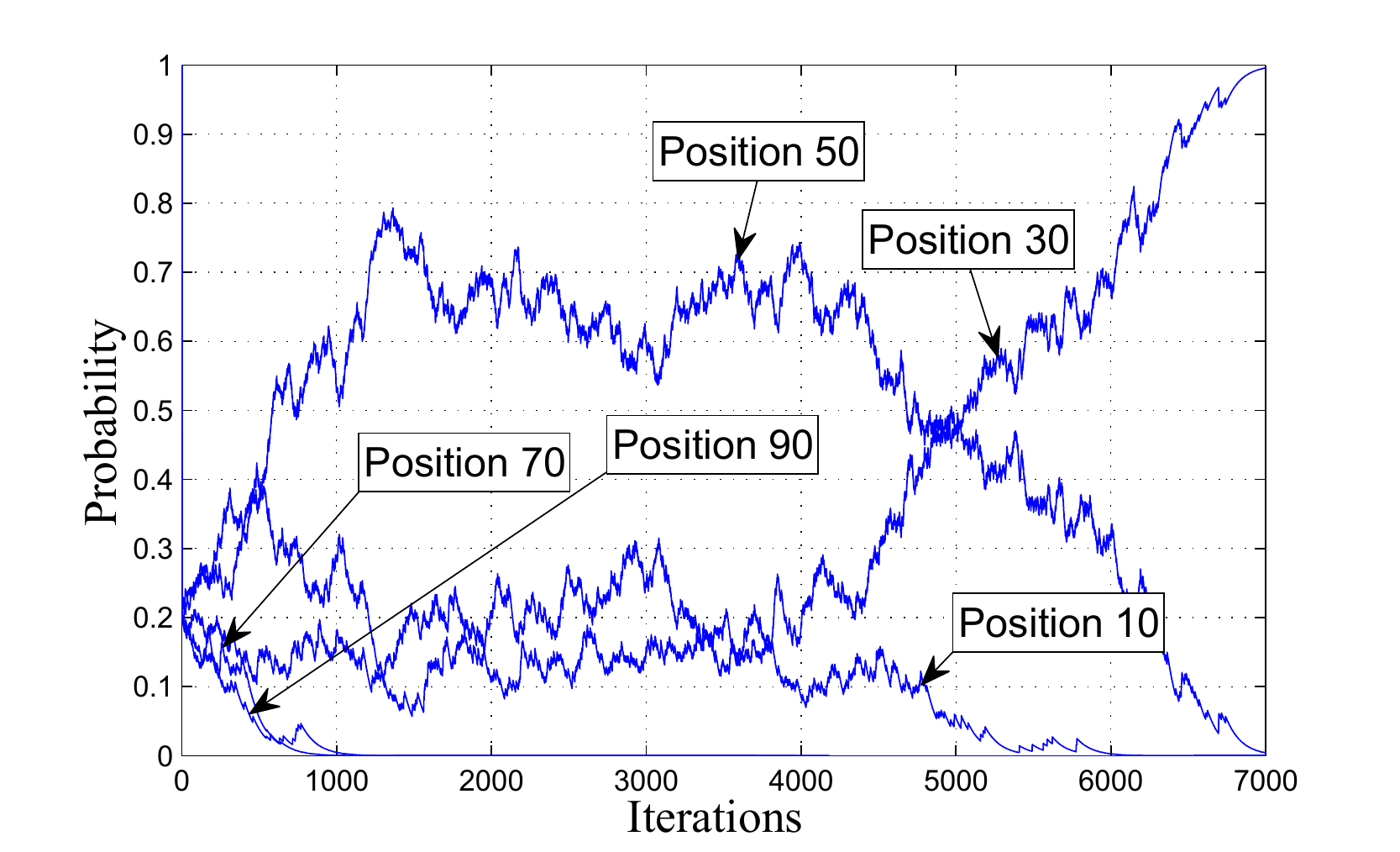}
\caption{Evolution of action probabilities for BS 1 with discrete stochastic
learning.}
\label{fig:learning_joueur1}
\end{figure}

\begin{figure}[tbp]
\hspace{-0.5cm} \includegraphics[scale=0.55]{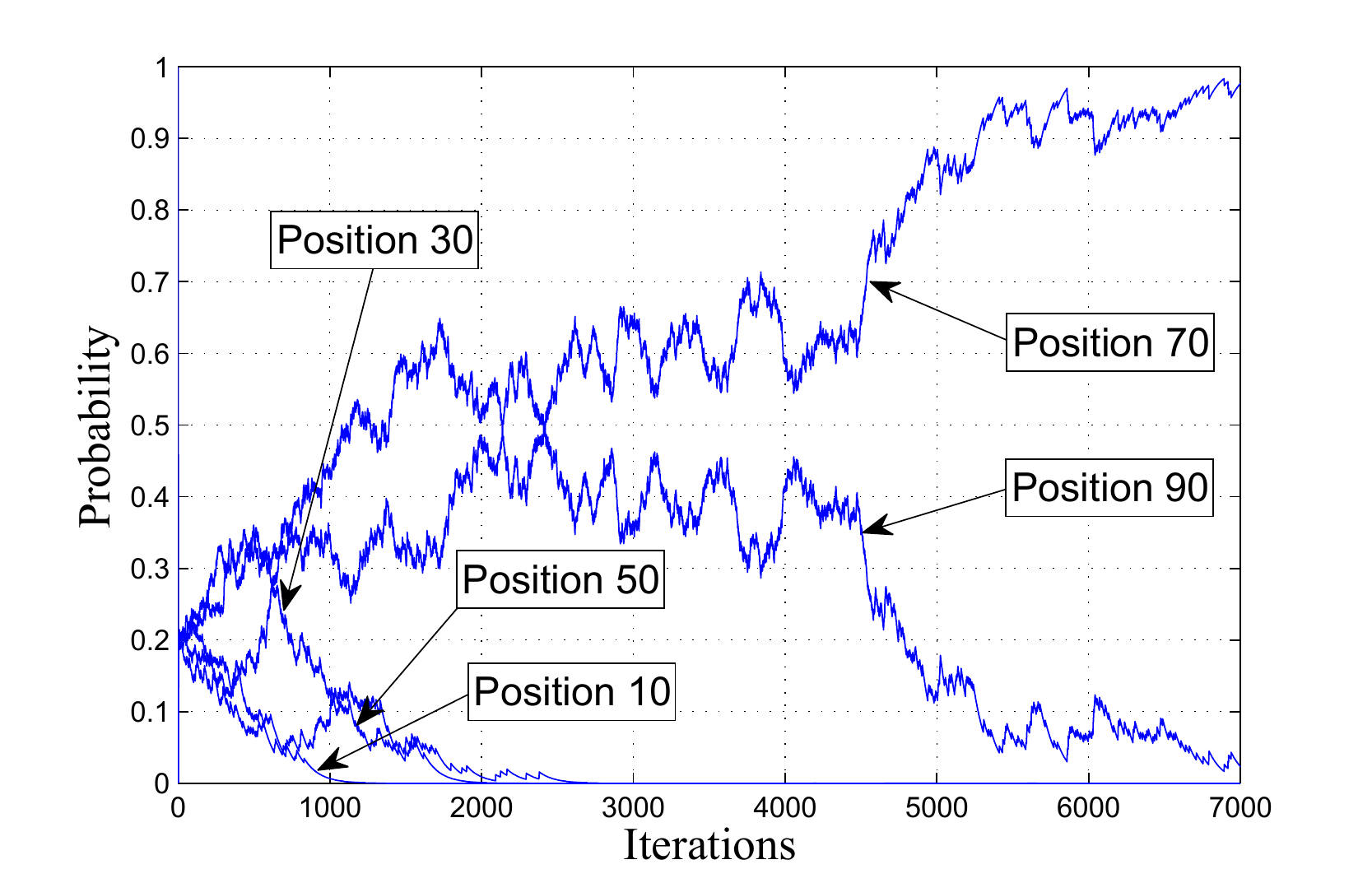}
\caption{Evolution of action probabilities for BS 2 with discrete stochastic
learning.}
\label{fig:learning_joueur2}
\end{figure}

Figures~\ref{fig:learning_joueur1} and~\ref{fig:learning_joueur2} illustrate
the evolution of the probability distribution vectors of two MSs. The
parameters of the simulation are the following: each MS has the same set of
possible positions $\{10,30,50,70,90\}$ and the step of the learning
algorithm is $b=0.01$.

Depending on the choice of $b$, two phenomena occur.

\begin{itemize}
\item  The higher the value of $b$, the lower the convergence time of the
algorithm,

\item  However, if $b$ is chosen too high, the algorithm may converge to
locations that do no correspond to an equilibrium of the game defined in
Section~\ref{Ssec:GameDef}.
\end{itemize}

Figure~\ref{fig:step} illustrates the convergence time as a function of $b$.
\begin{figure}[tbp]
\hspace{-0.5cm} \includegraphics[scale=0.45]{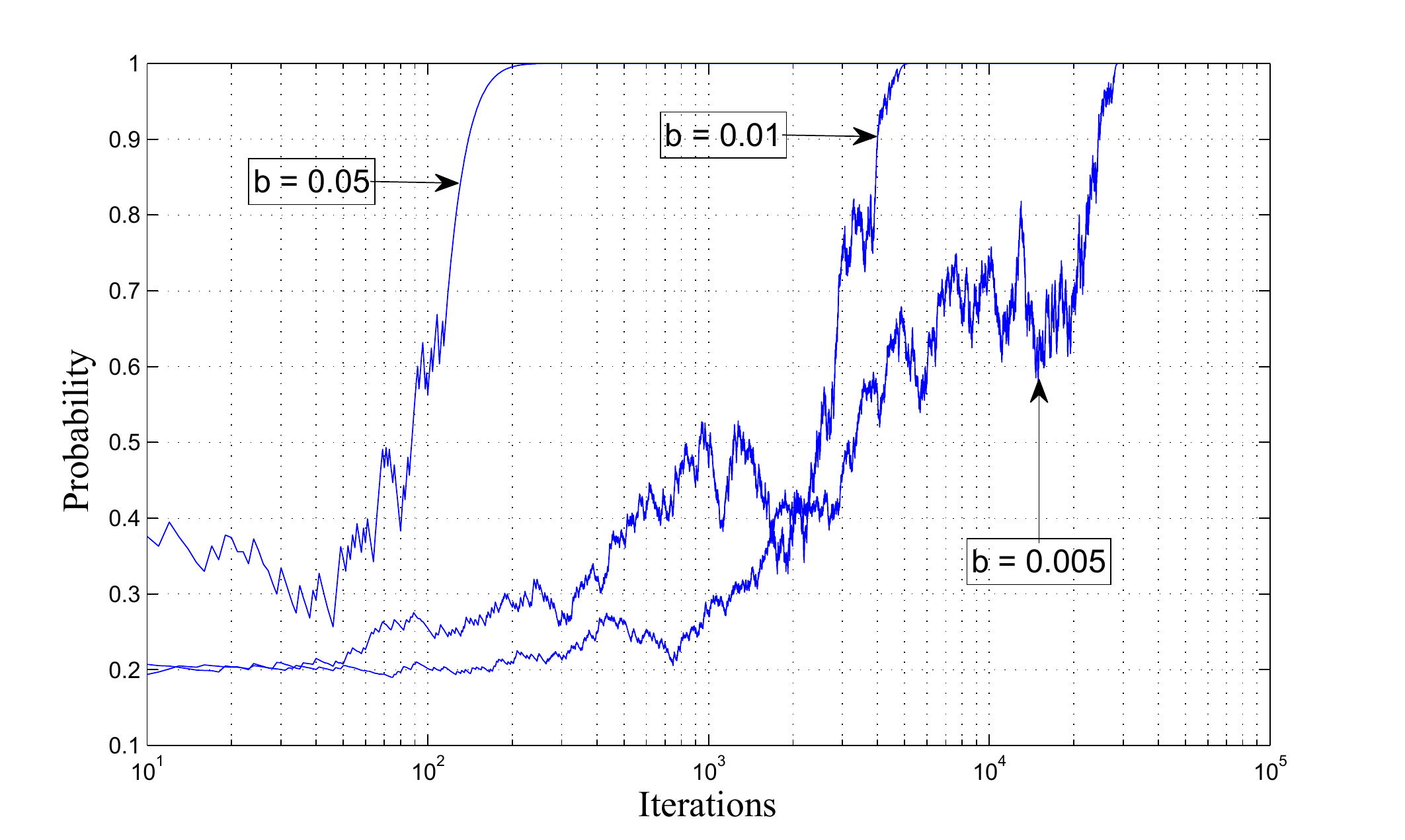} 
\caption{Convergence of probability with respect to step $b$}
\label{fig:step}
\end{figure}

The choice of $b$ is hence a trade-off between convergence time and accuracy
of the convergence.


\section{Conclusion and perspectives}
\label{sec:concl}

Obviously the proposed model is relatively simple and should be
improved to obtain refined results. However, this led us to several
interesting results such as a full characterization of Nash
equilibria and interesting behavior in terms of convergence. It
would be very relevant to extend this work to two-dimensional
scenario, define a suitable order for which uniqueness would be
ensured. More connections with the famous multi-source Weber problem~\cite{SandorContinuousWeber2003}
should be established to better understand the general problem of
\emph{deployment games}. Indeed, if an operator has to locate a set
of base stations, the problem becomes more complicated. The problem
becomes even more interesting if the set of possible constellations
is discrete, which would lead us to make connections with Voronoi
games~\cite{DurrVoronoi2007}. The authors believe there is large avenue for
contributions to the general problem under study, especially in
finding relevant assumptions to simplify it without too loss in
terms of understanding. Random matrix theory and stochastic geometry
might be for great help to achieve this challenging objective.

\bibliographystyle{plain}
\bibliography{md}

\begin{thebibliography}{10}

\bibitem{Altman09}
Eitan Altman, Anurag Kumar, Chandramani~Kishore Singh, and Rajesh Sundaresan.
\newblock Spatial sinr games combining base station placement and mobile
  association.
\newblock In {\em INFOCOM}, pages 1629--1637. IEEE, 2009.

\bibitem{SilvaDownlnk10}
Eitan Altman, Alonso Silva, Amidou Tembine, and Merouane Debbah.
\newblock Spatial games and global optimization for the mobile association
  problem: the downlink case.
\newblock In {\em IEEE/SIAM CDC'}, 2010.

\bibitem{BushMosteller55}
Robert~R. Bush and Frederick Mosteller.
\newblock {\em {Stochastic Models for Learning.}}
\newblock Wiley, 1st edition, 1955.

\bibitem{Cournot1838}
A.~Cournot.
\newblock {\em Researches into the principles of wealth}.
\newblock Irwin Paperback Classics in Economics (original: Recherches sur les
  principes math\'ematiques de la th\'eorie des richesses, 1838), 1963 (English
  Translation).

\bibitem{DurrVoronoi2007}
Christoph D{\"u}rr and Nguyen~Kim Thang.
\newblock Nash equilibria in {V}oronoi games on graphs.
\newblock {\em CoRR}, abs/cs/0702054, 2007.

\bibitem{SandorContinuousWeber2003}
S{\'a}ndor~P. Fekete, Joseph S.~B. Mitchell, and Karin Beurer.
\newblock On the continuous fermat-weber problem.
\newblock {\em CoRR}, cs.CG/0310027, 2003.

\bibitem{GabThi92}
Jean~J. Gabszewicz and Jacques-Francois Thisse.
\newblock Location.
\newblock In R.J. Aumann and S.~Hart, editors, {\em Handbook of Game Theory
  with Economic Applications}, volume~1 of {\em Handbook of Game Theory with
  Economic Applications}, chapter~9, pages 281--304. Elsevier, 1992.

\bibitem{Hotelling1929}
Harold Hotelling.
\newblock Stability in competition.
\newblock {\em The Economic Journal}, 39(153):41--57, 1929.

\bibitem{Koutsoupias99worst-caseequilibria}
Elias Koutsoupias and Christos Papadimitriou.
\newblock Worst-case equilibria.
\newblock In {\em the 16th Annual Symposium on Theoretical Aspects of Computer
  Science}, pages 404--413, 1999.

\bibitem{Litva96}
John Litva and Titus~K. Lo.
\newblock {\em Digital Beamforming in Wireless Communications}.
\newblock Artech House, Inc., Norwood, MA, USA, 1996.

\bibitem{LearningAutomtaNarendra89}
Kumpati~S. Narendra and Mandayam A.~L. Thathachar.
\newblock {\em Learning automata: an introduction}.
\newblock Prentice-Hall, Inc., Upper Saddle River, NJ, USA, 1989.

\bibitem{Plastria01}
Frank Plastria.
\newblock Static competitive facility location: An overview of optimisation
  approaches.
\newblock {\em European Journal of Operational Research}, 129(3):461--470,
  March 2001.

\bibitem{Rosen1965}
J.~B. Rosen.
\newblock {Existence and Uniqueness of Equilibrium Points for Concave N-Person
  Games}.
\newblock {\em Econometrica}, 33(3):520--534, 1965.

\bibitem{Sastry1994}
P.S. Sastry, V.V. Phansalkar, and M.A.L. Thathachar.
\newblock Decentralized learning of nash equilibria in multi-person stochastic
  games with incomplete information.
\newblock {\em Systems, Man and Cybernetics, IEEE Transactions on}, 24(5):769
  --777, May 1994.

\bibitem{Stack}
H.~Stackelberg.
\newblock {\em Marketform und Gleichgewicht}.
\newblock Oxford, U.K., 1934.

\bibitem{Varga2000}
Richard~S. Varga.
\newblock {\em Matrix iterative analysis}.
\newblock Springer series in computational mathematics, 27. Springer, 2nd rev.
  and expanded ed. edition, 2000.

\end{thebibliography}

\newpage
\appendix
\section{Proofs}
\subsection{Proof of Lemma~\ref{lemma:concavity}}
\label{proof:concavity}

We need to prove that $\widehat{U_k}(\underline{x})$ is concave with respect to $%
x_{k}$ over $\mathcal{A}_k$.

We know that the index of players verify $\left(
\ref{eq:order}\right) $. In this case, the regions associated to the BSs are
\begin{equation}
\left\{
\begin{array}{l}
\text{\lbrack }0,\frac{x_{1}+x_{2}}{2}\text{] for BS $1,$} \\
\text{\lbrack }\frac{x_{k-1}+x_{k}}{2},\frac{x_{k}+x_{k+1}}{2}\text{]}\text{
for BS $k$, }k\in \{2,\ldots ,K-1\} \\
\text{\lbrack }\frac{x_{K-1}+x_{K}}{2},L\text{]}\text{ for BS $K.$}
\end{array}
\right.   \label{eq:regions}
\end{equation}
Then
\begin{equation}
\left\{
\begin{array}{l}
\widehat{U_1}(\underline{x})=\int_{0}^{\frac{x_{1}+x_{2}}{2}}(\varepsilon
^{2}+(x-x_{1})^{2})^{-\frac{\alpha }{2}}{\text{d}x} \\
\widehat{U_k}(\underline{x})=\int_{\frac{x_{k-1}+x_{k}}{2}}^{\frac{x_{k}+x_{k+1}}{2}%
}(\varepsilon ^{2}+(x-x_{k})^{2})^{-\frac{\alpha }{2}}{\text{d}x},\;k\in \{2,\ldots
,K-1\} \\
\widehat{U_K}(\underline{x})=\int_{\frac{x_{K-1}+x_{K}}{2}}^{L}(\varepsilon
^{2}+(x-x_{K})^{2})^{-\frac{\alpha }{2}}{\text{d}x}
\end{array}
\right.
\end{equation}
which may be rewritten as
\begin{equation}
\left\{
\begin{array}{lll}
\widehat{U_1}(\underline{x})=\int_{-x_{1}}^{\frac{x_{2}-x_{1}}{2}}(\varepsilon
^{2}+x^{2})^{-\frac{\alpha }{2}}{\text{d}x} &  &  \\
\widehat{U_k}(\underline{x})=\int_{\frac{x_{k-1}-x_{k}}{2}}^{\frac{x_{k+1}-x_{k}}{2}%
}(\varepsilon ^{2}+x^{2})^{-\frac{\alpha }{2}}{\text{d}x},\;k\in \{2,\ldots ,K-1\}
&  &  \\
\widehat{U_K}(\underline{x})=\int_{\frac{x_{K-1}-x_{K}}{2}}^{L-x_{K}}(\varepsilon
^{2}+x^{2})^{-\frac{\alpha }{2}}{\text{d}x} &  &
\end{array}
\right.
\end{equation}

To prove the existence of a Nash equilibrium, the concavity of $\widehat{U_k}\left(
\underline{x}\right) $ with respect to $x_{k}$ $\forall k\in \mathcal{K}$
has now to be established. One has the first-order partial derivatives
\begin{equation}
\left\{
\begin{array}{l}
\frac{\partial \widehat{U_1}}{\partial x_{1}}(\underline{x})=-\frac{1}{2}%
(\varepsilon ^{2}+(\frac{x_{2}-x_{1}}{2})^{2})^{-\frac{\alpha }{2}%
}+(\varepsilon ^{2}+x_{1}^{2})^{-\frac{\alpha }{2}} \\
\frac{\partial \widehat{U_k}}{\partial x_{k}}(\underline{x})=-\frac{1}{2}%
(\varepsilon ^{2}+(\frac{x_{k+1}-x_{k}}{2})^{2})^{-\frac{\alpha }{2}}+\frac{1%
}{2}(\varepsilon ^{2}+(\frac{x_{k-1}-x_{k}}{2})^{2})^{-\frac{\alpha }{2}} \\
\frac{\partial \widehat{U_K}}{\partial x_{K}}(\underline{x})=-(\varepsilon
^{2}+(L-x_{K})^{2})^{-\frac{\alpha }{2}}+\frac{1}{2}(\varepsilon ^{2}+(\frac{%
x_{K-1}-x_{K}}{2})^{2})^{-\frac{\alpha }{2}}
\end{array}
\right.  \label{eq:deriv}
\end{equation}
and the second-order partial derivatives
\begin{equation}
\left\{
\begin{array}{l}
\frac{\partial ^{2}\widehat{U_1}}{\partial x_{1}^{2}}(\underline{x})=-\frac{\alpha }{%
8}\frac{x_{2}-x_{1}}{(\varepsilon ^{2}+(\frac{x_{2}-x_{1}}{2})^{2})^{\frac{%
\alpha }{2}+1}}-\alpha \frac{x_{1}}{(\varepsilon ^{2}+x_{1}^{2})^{\frac{%
\alpha }{2}+1}} \\
\frac{\partial ^{2}\widehat{U_k}}{\partial x_{k}^{2}}(\underline{x})=\frac{\alpha }{8%
}\bigl(-\frac{x_{k+1}-x_{k}}{(\varepsilon ^{2}+(\frac{x_{k+1}-x_{k}}{2}%
)^{2})^{\frac{\alpha }{2}+1}}+\frac{x_{k-1}-x_{k}}{(\varepsilon ^{2}+(\frac{%
x_{k-1}-x_{k}}{2})^{2})^{\frac{\alpha }{2}+1}}\bigr) \\
\frac{\partial ^{2}\widehat{U_K}}{\partial x_{K}^{2}}(\underline{x})=-\alpha \frac{%
L-x_{K}}{(\varepsilon ^{2}+(L-x_{K})^{2})^{\frac{\alpha }{2}+1}}+\frac{%
\alpha }{8}\frac{x_{K-1}-x_{K}}{(\varepsilon ^{2}+(\frac{x_{K-1}-x_{K}}{2}%
)^{2})^{\frac{\alpha }{2}+1}}
\end{array}
\right.
\end{equation}

Given $\left( \ref{eq:order}\right) $, we have $\frac{\partial ^{2}\widehat{U_k}}{%
\partial x_{k}^{2}}(\underline{x})<0\;\forall k\in \mathcal{K}$. Thus $\widehat{U_k}(%
\underline{x})$ is concave with respect to $x_{k}$ over$\mathcal{A}_k$, $\forall k\in \mathcal{K}
$.


\subsection{Proof of Theorem~\ref{th:unicity}}
\label{proof:unicity}

In the context of $K$-player game, the DSC~\cite{Rosen1965} condition writes $\forall(\underline{a},\underline{a}') \in \mathcal{A}^2$ such that $\underline{a}\neq \underline{a}'$
\begin{equation}
\label{eq:DCS_condition}
\sum_{k=1}^K(a'_k-a_k)\biggl(\frac{\partial \widehat{U_k}}{\partial x_k}(\underline{a})-\frac{\partial \widehat{U_k}}{\partial x_k}(\underline{a}')\biggr)> 0
\end{equation}

For clarity reasons, we denote
\begin{equation}
g(a,b)=\biggl((\epsilon^2+a^2)^{-\frac{\alpha}{2}} - (\epsilon^2+b^2)^{-\frac{\alpha}{2}}\biggr),\;(a,b)\in\mathbb{R}^2.
\end{equation}
By (\ref{eq:deriv}), it turns
\begin{equation}
\left\{
\begin{array}{l}
\scriptstyle{\frac{\partial \widehat{U_1}}{\partial x_1}(\underline{a})-\frac{\partial \widehat{U_1}}{\partial x_1}(\underline{a}')=
g(a_1,a'_1) - \frac{1}{2}\biggl(g(\frac{a_{1}-a_{2}}{2},\frac{a'_{1}-a'_{2}}{2})\biggr)} \\
\scriptstyle{\frac{\partial \widehat{U_k}}{\partial x_k}(\underline{a})-\frac{\partial \widehat{U_k}}{\partial x_k}(\underline{a}')=
\frac{1}{2}\biggl(g(\frac{a_{k-1}-a_k}{2},\frac{a'_{k-1}-a'_k}{2}) - g(\frac{a_{k}-a_{k+1}}{2},\frac{a'_{k}-a'_{k+1}}{2})\biggr)}\\
\scriptstyle{\frac{\partial \widehat{U_K}}{\partial x_K}(\underline{a})-\frac{\partial \widehat{U_K}}{\partial x_K}(\underline{a}')=
\frac{1}{2}\biggl(g(\frac{a_{K-1}-a_K}{2},\frac{a'_{K-1}-a'_k}{2})\biggr) - g(L-a_K,L-a'_K)}
\end{array}
\right.  
\end{equation}

Equation (\ref{eq:DCS_condition}) becomes
\begin{equation}
\begin{aligned}
&\scriptstyle{(a'_1-a_1)\biggl(g(a_1,a'_1) - \frac{1}{2}g(\frac{a_{1}-a_{2}}{2},\frac{a'_{1}-a'_{2}}{2})\biggr)} \\
&\scriptstyle{+\sum_{k=2}^{K-1}\frac{a'_k-a_k}{2}\biggl(g(\frac{a_{k-1}-a_i}{2},\frac{a'_{k-1}-a'_k}{2}) - g(\frac{a_{k}-a_{k+1}}{2},\frac{a'_{k}-a'_{k+1}}{2})\biggr)} \\
&\scriptstyle{+(a'_K-a_K)\biggl(\frac{1}{2}g(\frac{a_{K-1}-a_K}{2},\frac{a'_{K-1}-a'_K}{2}) - g(L-a_K,L-a'_K)\biggr) > 0}
\end{aligned}
\end{equation}
which can also be written
\begin{equation}
\begin{aligned}
&(a'_1-a_1)g(a_1,a'_1) \\
&+\sum_{k=2}^{K}\frac{a'_{k}-a'_{k-1}-(a_{k}-a_{k-1})}{2}g\biggl(\frac{a_{k-1}-a_k}{2},\frac{a'_{k-1}-a'_k}{2}\biggr) \\
&+(L-a'_K-(L-a_K))g(L-a_K,L-a'_K) > 0
\end{aligned}
\end{equation}
However $\forall (a,b) \in \mathbb{R}^{*2}$, 
\begin{equation}
(b-a)g(a,b) > 0
\end{equation}
and by the order condition~(\ref{eq:order})
\begin{equation}
\left\{
\begin{array}{lll}
(a'_1,a_1) \in \mathbb{R}^{*2} \\
(\frac{a'_{k}-a'_{k-1}}{2},\frac{a_{k}-a_{k-1}}{2}) \in \mathbb{R}^{*2}\; \forall k \in \{2,\ldots,K\}\\
(L-a'_K,L-a_K) \in \mathbb{R}^{*2}
\end{array}
\right.
\end{equation}

Then the DSC condition is verified and the equilibrium is unique.

\subsection{Derivation of (\ref{eq:charac})}
\label{proof:charac} To obtain a formal expression of a Nash equilibrium,
the intersection of the best-responses has to be considered
\begin{equation}
\frac{\partial \hat{U}_{k}}{\partial x_{k}}(\underline{x})=0,\;\forall k\in
\mathcal{K},\;\forall \alpha \geqslant 2.  \label{Eq:BestRespInter}
\end{equation}

$\left( \ref{Eq:BestRespInter}\right) $
leads to
\begin{equation}
\left\{
\begin{array}{l}
(2^{\frac{2}{\alpha }}(\varepsilon ^{2}+(\frac{x_{2}-x_{1}}{2})^{2}))^{\frac{%
\alpha }{2}}=(\varepsilon ^{2}+x_{1}^{2})^{\frac{\alpha }{2}} \\
(\varepsilon ^{2}+(\frac{x_{k+1}-x_{k}}{2})^{2})^{\frac{\alpha }{2}%
}=(\varepsilon ^{2}+(\frac{x_{k-1}-x_{k}}{2})^{2})^{\frac{\alpha }{2}%
},\;\forall k\in \{2,\ldots ,K-1\} \\
(\varepsilon ^{2}+(L-x_{K})^{2})^{\frac{\alpha }{2}}=(2^{\frac{2}{\alpha }%
}(\varepsilon ^{2}+(\frac{x_{K-1}-x_{K}}{2})^{2}))^{\frac{\alpha }{2}}.
\end{array}
\right.
\end{equation}
Since all terms elevated at power $\alpha /2$ are positive, one gets
\begin{equation}
\left\{
\begin{array}{l}
2^{\frac{2}{\alpha }}(\varepsilon ^{2}+(\frac{x_{2}-x_{1}}{2}%
)^{2})=\varepsilon ^{2}+x_{1}^{2} \\
(\frac{x_{i+k}-x_{k}}{2})^{2}=(\frac{x_{k-1}-x_{k}}{2})^{2},\;\forall k\in
\{2,\ldots ,K-1\} \\
2^{\frac{2}{\alpha }}(\varepsilon ^{2}+(\frac{x_{K-1}-x_{K}}{2}%
)^{2})=\varepsilon ^{2}+(L-x_{K})^{2}
\end{array}
\right.
\end{equation}

One real solution $\underline{x}$ verifying $\left( \ref{eq:order}\right) $
has thus to satisfy $\left( \ref{eq:charac}\right) $ when$\,\alpha >2$ and $%
\left( \ref{eq:charac2}\right) $ when $\alpha =2$.

\end{document}